\documentclass[a4paper,nofootinbib,preprintnumbers,floatfix,superscriptaddress,pra,twocolumn,showpacs]{revtex4-1}
\usepackage{amsmath, amsthm, amssymb}
\usepackage{graphicx}
\usepackage{dcolumn}
\usepackage{bm}
\usepackage{color}
\usepackage{amsmath}
\usepackage{amsfonts}
\usepackage{amssymb}
\usepackage{bbm}
\usepackage{hyperref}

\newcommand{\R}{\mathbb{R}}

\newcommand{\ket}[1]{| #1 \rangle}
\newcommand{\bra}[1]{\langle #1 |}

\newtheorem{prop}{Proposition}
\newtheorem{corollary}[prop]{Corollary}

\newtheorem{definition}[prop]{Definition}
\newtheorem{definitionlemma}[prop]{Definition and Lemma}

\newtheorem{lemma}[prop]{Lemma}

\newcommand{\beq}{\begin{equation}}
\newcommand{\eeq}{\end{equation}}
\newcommand{\bea}[1]{\begin{equation}\begin{array}{#1}}
\newcommand{\eea}{\end{array}\end{equation}}
\newcommand{\beqn}{\begin{eqnarray}}
\newcommand{\eeqn}{\end{eqnarray}}

\renewcommand{\rho}{\varrho}

\newcommand{\processnext}[1]{%
  \ifx\listfinish#1\empty\else\listact{#1}\expandafter\processnext\fi}

\newcommand{\eqnref}[1]{(\ref{#1})}




\newcommand{\ba}{\begin{eqnarray}}
\newcommand{\be}{\begin{equation}}
\newcommand{\ee}{\end{equation}}
\newcommand{\ea}{\end{eqnarray}}
\newcommand{\ban}{\begin{eqnarray*}}
\newcommand{\ean}{\end{eqnarray*}}

{\begin{framed}\begin{small}}
{\end{small}\end{framed}}

\DeclareGraphicsExtensions{.pdf,.png,.jpg}

\makeindex

\begin{document}

\title{Causal structures from entropic information: Geometry and novel scenarios}
\date{\today}

\author{Rafael Chaves}
\email{rafael.chaves@physik.uni-freiburg.de}
\affiliation{Institute for Physics, University of Freiburg, Rheinstrasse 10, D-79104 Freiburg, Germany}
\author{Lukas Luft}
\affiliation{Institute for Physics, University of Freiburg, Rheinstrasse 10, D-79104 Freiburg, Germany}
\author{David Gross}
\affiliation{Institute for Physics, University of Freiburg, Rheinstrasse 10, D-79104 Freiburg, Germany}

\begin{abstract}
The fields of quantum non-locality in physics, and causal discovery in machine learning, both face the problem of deciding whether observed data is compatible with a presumed causal relationship between the variables (for example a local hidden variable model). Traditionally, Bell inequalities have been used to describe the restrictions imposed by causal structures on marginal distributions. However, some structures give rise to non-convex constraints on the accessible data, and it has recently been noted that linear inequalities on the observable \emph{entropies} capture these situations more naturally.  In this paper, we show the versatility of the entropic approach by greatly expanding the set of scenarios for which entropic constraints are known. For the first time, we treat Bell scenarios involving multiple parties and multiple observables per party. Going beyond the usual Bell setup, we exhibit inequalities for scenarios with extra conditional independence assumptions, as well as a limited amount of shared randomness between the parties. Many of our results are based on a geometric observation: Bell
polytopes for two-outcome measurements can be naturally imbedded into the convex cone of attainable marginal entropies. Thus, any entropic inequality can be translated into one valid for probabilities. In some situations the converse also holds, which provides us with a rich source of candidate entropic inequalities.

\end{abstract}

\maketitle

\section{Introduction}

Starting point of this paper is the question: What can be inferred about the causal relationship of a collection of random variables from a restricted set of observations? To phrase this problem more precisely, we need to introduce the notions of a \emph{marginal scenario} and a \emph{causal structure} -- the two pieces of data which specify the instances we will be considering.

A marginal scenario describes which sets of random variables are
jointly observable. Joint observations might be constrained for a
variety of reasons. In quantum non-locality, these reasons are
physical: random variables corresponding to non-commuting observables
cannot always be jointly measured. In general, there might also be practical reasons, for instance: We have no access to the variable describing the genetic disposition of a patient to become both a smoker and to develop lung cancer (not the least because we do not know whether such a genetic influence exists).

For the purpose of this paper, a causal structure is a list of linear
constraints on the (conditional) mutual information between sets of
random variables. For example, in the familiar Bell scenario,
one commonly demands that
measurement choices of, respectively, Alice $X$ and Bob $Y$ are
independent of the hidden variable $\lambda$:
$I(X,Y:\lambda)=0$.
Relaxing this constraint to demand the correlations be small
$I(X,Y:\lambda)\leq \epsilon$ would still be linear in the mutual information
and thus an element of a causal structure according to our definition.
By allowing for arbitrary linear constraints, we go slightly beyond
the way the notion of ``causal structure'' is commonly formalized in
the field of causal inference \cite{Pearlbook,Spirtesbook}. There, the
combinatorial structure of a \emph{direct acyclic graph} (DAG) is used
to encode certain sets of conditional mutual informations that are
assumed to vanish. Our approach subsumes and extends this.

With every given causal structure, we can associate the set of
marginal distributions that are compatible with it. If we observe a
data point that lies outside that region, we can exclude the presumed
causal structure as a valid model for the observed data. This logical
structure (characterize the global properties compatible with local
observations) is an instance of a \emph{marginal problem}, which occur
frequently both in classical \cite{cuadras2002distributions} and in
quantum probability
\cite{Klyachko2006,Gross2012a,Gross2012b,Christandl2012}.

In quantum non-locality \cite{NLreview2013}, the focus has traditionally been on settings
for which the marginal distributions happen to be convex polytopes. In
that case, checking whether an observed marginal distribution is
compatible with the causal model reduces to the task of verifying that
none of the inequalities associated with the facets of the polytopes
is violated---these are the Bell inequalities~\cite{Bell1964,Pitowsky1989}.
However, the non-convex nature of mutual information means that the
marginals that appear for more general causal structures are, at best,
non-trivial algebraic varieties. A few such examples have been treated
in the quantum literature, including bilocality scenarios
\cite{Branciard2010} or scenarios that allow for correlations between Alice's and Bob's measurement choices with the hidden variable
\cite{Hall2010,Barrett2010} (c.f.\ also Section~\ref{sec:Imm}).

A priori, it is unclear whether these more complicated marginal
regions allow for an explicit description that is tractable from an
analytic and computational point of view. It is this problem that
entropic methods greatly simplify.  Indeed, as indicated above,
(conditional) independence constraints are \emph{linear} in terms of
Shannon entropies. As a result, the image of the marginal regions of
general causal structures turn out to possess natural descriptions in
terms of linear inequalities.

The set of all joint entropies (without any causal constraints and
prior to marginalization) has been analyzed extensively in information
theory \cite{Yeung1997,Yeung2008}. While it is known to be a convex
cone, its precise form is still not explicitly understood. For
practical purposes, it is often replaced by an outter approximation:
the convex \emph{Shannon cone}, which is defined by a finite number
of explicit \emph{Shannon type} inequalities. All marginals of the
Shannon cone can, in principle, be found computationally using linear
programming \cite{Yeung2008}. What is more, causal structures merely
amount to further linear constraints and can therefore be included in
a natural way.  An additional nice feature of entropic inequalities is
that they are valid for variables consisting of any number of
outcomes. This stands in stark contrast to the usual approach for
which increasing the number of outcomes of the marginal scenario
increases the dimension and complexity of the correlation polytope, in
practice meaning that new inequalities need to be derived and tailored
to the specific number of outcomes under consideration.

On the negative side entropic inequalities provide, in principle, only a necessary condition for the solution of the marginal problem~\cite{Chaves2013}. In spite of that, entropic inequalities are known to be fine enough to distinguish, for example, different causal structures~\cite{Steudel2010,Fritz2012} or witness non-locality and contextuality~\cite{Braunstein1988,Chaves2012,Kurzynki2012,FritzChaves2013,Pan2013,Devi2013,Katiyar2013,Kurzynski2013}.

In spite of its potential applications, the entropic approach to the marginal problem has been little explored. In particular, no entropic inequalities are known for Bell scenarios involving more than $2$ parties or many
measurement settings. Another problem, well suited to be tackled with entropies, is the one where the amount of shared randomness between the parties involved in a Bell test is bounded to be below a certain value. Commonly,
shared randomness is assumed to be a free and boundless resource, but quantitative considerations about how much of it is actually necessary to reproduce some quantum correlations can give useful insights that would be
extremely hard to tackle with the usual approaches.

These are the kind of problems we look at in this paper. In Sec. \ref{sec:shannon_cone} we start defining the entropic cone described by all Shannon-type inequalities. In Sec. \ref{sec:cones} we state known results about
convex cones that will be used in Sec. \ref{sec:backward_theorem} to prove a theorem showing that, for marginal scenarios without statistical independence, any Shannon-type inequality is also valid for the probabilities if a proper translation is made. We also show that the converse is in general
not true by providing a counter example showing that not every inequality for probabilities is also valid for Shannon entropies. Inspired by these results, in Sec. \ref{sec:Imm} we derive the entropic version of the
Collins-Gisin inequalities \cite{Collins2004} also considering the effects of bounded shared randomness between the parties. In Sec. \ref{sec:Multi} we derive a multipartite generalization of the entropic inequality
originally derived by Braunstein and Caves for the bipartite case \cite{Braunstein1988}. In Sec. \ref{sec:computational_results} we computationally apply the Fourier-Motzkin (FM) algorithm to derive entropic inequalities
for
a couple of different scenarios, including marginal models that also include statistical independencies and the effects of bounded shared randomness. We discuss our findings in Sec. \ref{sec:discussion} while technical
results and proofs can be found in the Appendices.

\section{Characterizing marginal scenarios with Shannon-type inequalities}
\label{sec:shannon_cone}

\subsection{Marginal Scenarios}

Given a set of variables $X_{1}, \dots, X_{n}$, a marginal scenario is a collection of certain subsets of them, those subsets of variables that can be jointly measured. In the case of Bell scenarios, the marginal scenario is
achieved by imposing space-like separation between some of the observables. Clearly, a subset of jointly measurable variables is still a jointly measurable set. Formally (see \cite{Chaves2012,FritzChaves2013} for further
details),

\begin{definition}
A \emph{marginal scenario} $\mathcal{M}$ is a collection
$\mathcal{M}=\{S_1,\ldots,S_{|\mathcal{M}|}\}$ of subsets $S_i\subseteq\{X_1,\ldots,X_n\}$ such that
if $S\in\mathcal{M}$ and $S'\subseteq S$, then also $S'\in\mathcal{M}$.
\end{definition}

In practice some joint statistics is measured for every
$S\in\mathcal{M}$. For example, if the variables $X_i$ and $X_j$ are
jointly measurable ($\{X_i,X_j\}=:S\in\mathcal{M}$), one can access
$P(X_i = x_i, X_j = x_j)$,  the probability of obtaining the outcomes
$X_i=x_i$ and $X_j=x_j$. These marginal probabilities determine in
particular
the
marginal Shannon entropy:
\begin{eqnarray*}
	&&
	H(X_i, X_j ) \\
	&=& - \sum_{x_i,x_{j}}
	P(X_i=x_i,X_j = x_{j})
	\log_{2}
	P(X_i=x_i,X_j = x_{j}).
\end{eqnarray*}
As first noticed in~\cite{Braunstein1988}, the existence of a joint
distribution for all variables $X_1,\cdots,X_n$ implies
that marginal Shannon entropies satisfy certain inequalities, which
may be violated by measurement statistics originating from quantum
experiments.
Below, we will recall how to compute these inequalities in general,
potentially in the presence of extra causal constraints.

\subsection{Entropy cones}

For the purpose of this section, assume that a number $n$ and some
joint distribution for $n$ random variables $X_1, \dots, X_n$. We
denote the set of indices of the random variables by $[n]=\{1, \dots,
n\}$ and its powerset (i.e.\ set of subsets) by $2^{[n]}$. For every
subset $S\in 2^{[n]}$ of indices, let $X_S$ be the tuple of
observables $(X_i)_{i\in S}$ and $H(S):=H(X_S)$ be the associated
marginal entropy. With this convention, the entropy becomes a function
\begin{eqnarray*}
	H: 2^{[n]} \to \mathbbm{R}, \qquad S \mapsto H(S)
\end{eqnarray*}
on the power set. The linear space of all set functions is of course
isomorphic to $\mathbbm{R}^{2^n}$ together with a basis $\{ e_S \,|\,|
S\subset 2^{[n]}\}$ labeled by subsets. We denote that vector space by
$R_n$ and will henceforth not distinguish between real-valued set
functions and the space $R_n$. For every vector $h\in\R_n$ and
$S\in2^{[n]}$, we denote by $h_S$ the component of $S$ with respect to
the basis vector $e_S$.

The region
\begin{eqnarray*}
	\left\{ h \in R_n \,|\, h_S = H(S) \text{ for some entropy function
	} H \right\}
\end{eqnarray*}
of vectors in $R_n$ that correspond to entropies has been researched
extensively in information theory \cite{Yeung2008}. It is known
to be a
convex cone (c.f.\ Section~\ref{sec:cones}), but an explicit
description has not yet been found. However, several properties of
entropy functions are well-understood. These are, respectively,
\emph{monotonicity}, \emph{sub-modularity}, and a normalization
condition:
\begin{align}
\begin{split}
	\label{shannonineqs_basic}
	H_{}(T\setminus\{i\}) &\leq H_{}(T) \\
	H_{}(S) + H_{}(S\cup\{i,j\}) &\leq H_{}(S\cup \{i\}) + H_{}(S\cup \{j\}) \\
	H_{}(\emptyset) &= 0
\end{split}
\end{align}
for all $S, T\in 2^{[n]}$.
The set of inequalities (\ref{shannonineqs_basic}) are known as the
\emph{elementary inequalities}
in information theory or the
\emph{polymatroidal axioms}.
An inequality that follows from the elementary ones is called a
\emph{Shannon-type inequality}.
The region defined by the Shannon-type inequalities is the
\emph{Shannon cone} $\Gamma_n$, a polyhedral
closed convex cone $\Gamma_n$. Clearly,  it is an outter approximation
to the true entropy cone. Since the latter is not yet fully
characterized, we will work for the remainder of this paper solely in
terms of the Shannon cone\footnote{
	This relaxation implies that while all inequalities we will derive
	below are valid for any true entropy vector, they may fail to be
	tight.
}. For future reference, we re-state this definition more formally:

\begin{definition}
	The \emph{Shannon cone} $\Gamma_n$
	is the set of vectors $h\in R_n$ that are
	\begin{enumerate}
        \item
		\emph{non-negative}
		\begin{equation*}
			h_{A} \geq 0 \text{ (with equality if $A=\emptyset$)},
		\end{equation*}

		\item
		\emph{increasing}
		\begin{equation*}
			h_B \leq h_A \text{ for $B \subseteq A$ },
		\end{equation*}

		\item
		\emph{sub-modular}
		\begin{equation*}
			h_{A\cup B} + h_{A \cap B} \leq h_{A} + h_{B}
		\end{equation*}
	\end{enumerate}
	for any $A, B \in 2^{[n]}$.
\end{definition}

We now return to descriptions involving a marginal scenario
$\mathcal{M}$.
Given a point $h:2^{[n]}\to\R$, one computes the restriction $h_{|\mathcal{M}}:\mathcal{M}\to\R$ dismissing the values $h(S)$ for all $S\in[n]\setminus\mathcal{M}$. The entropic cone bounding the correlations in
$\mathcal{M}$ is a projection of $\Gamma_n$ along a map $\R^{2^{[n]}}\to\R^\mathcal{M}$ throwing away some of the coordinates, that ones not corresponding to observable quantities. This set is also a convex cone, that we
denote by $\Gamma^{\mathcal{M}}$. Given an inequality description of $\Gamma^{\mathcal{M}}$, deciding if the marginal model can be extended is very simple, since one only needs to check whether it satisfies all the
inequalities defining it. In other terms, if a marginal model violates an inequality derived only by the combination of polymatroidal axioms, this implies that this marginal model cannot arise from a joint probability
distribution.

To determine the projection $\Gamma^{\mathcal{M}}$, a natural
possibility would be to calculate the extremal rays of $\Gamma_n$ and
dismiss the irrelevant coordinates of it. However, determining all the
extremal rays of the cone $\Gamma_n$ is a very hard problem, with
explicit solutions known only for few cases
\cite{Shapley72,Kashiwabara1996,Studeny2000}. To determine
$\Gamma^{\mathcal{M}}$ in practice we start with the inequality
description~(\ref{shannonineqs_basic}) of $\Gamma_n$ and then apply a
Fourier-Motzkin (FM) elimination \cite{Williams1986}, a standard
method for calculating the inequality description for the projection
of a polyhedral cone.

\subsection{Inequalities for marginal entropies}

To illustrate the general method, we begin considering the simplest non-trivial Bell scenario, corresponding to the CHSH scenario \cite{CHSH} and consisting of two parties, say A and B, who can measure one out of two
observables each, $\{A_0,A_1\}$ and $\{B_0,B_1\}$ respectively. This corresponds to a marginal scenario consisting of the following observable variables:
$\mathcal{M}=\{\{A_0,B_0\},\{A_0,B_1\},\{A_1,B_0\},\{A_1,B_1\}\}$ $=\{A_0,A_1,B_0,B_1,A_0B_0,A_0B_1,A_1B_0,A_1B_1\}$.
As shown in \cite{Chaves2012,FritzChaves2013} the only non-trivial Shannon-type entropic inequality (up to symmetries) corresponds to the inequality derived by Braunstein and Caves \cite{Braunstein1988}, the entropic CHSH,
given by
\begin{align}
\label{ECHSH}
CHSH^{E}= &-H_{A_0B_0}-H_{A_0B_1}-H_{A_1B_0}\\ \nonumber
&+H_{A_1B_1} + H_{A_0} + H_{B_0} \geq 0
\end{align}
where here and in following we employ the notation $H(A_iB_j)=H_{A_iB_j}$ (similarly to any number of variables) to avoid lengthy expressions.

In Ref. \cite{Braunstein1988} this inequality was derived using the chain rule of entropies. However, as just discussed, any Shannon-type inequality can be derived from the elemental set of
inequalities \eqref{shannonineqs_basic}. To illustrate the general procedure, we consider how to obtain the entropic inequality \eqref{ECHSH}, performing a FM elimination of the non-observable variables appearing in the set
of elementary inequalities. To derive the CHSH inequality \eqref{ECHSH} it is sufficient to combine the two \emph{sub-modularity inequalities}
\begin{align}
 &H_{A_0B_0}+H_{A_0B_1} \geq H_{A_0B_0B_1}+H_{A_0} \\
 &H_{A_1B_0}+H_{B_0B_1} \geq H_{A_1B_0B_1}+H_{B_0}.
\end{align}
Using that $H_{A_0B_0B_1} \geq H_{B_0B_1}$ and $H_{A_1B_0B_1} \geq H_{A_1B_1}$ we get exactly \eqref{ECHSH}. Note however, that these two last \emph{monotonicity inequalities} are not in the elemental set
\eqref{shannonineqs_basic}. To obtain for instance $H_{A_0B_0B_1} \geq H_{B_0B_1}$ from the basic ones we combine
\begin{align}
 &H_{A_0A_1B_0B_1} \geq H_{A_1B_0B_1}\\
 &H_{A_1B_0B_1}+H_{A_0B_0B_1} \geq H_{A_0A_1B_0B_1}+H_{B_0B_1}
\end{align}
It is clear that, in general, any monotonicity inequality, follows immediately from the basic ones.

One should note the similarity of $CHSH^{E}$ with the usual CHSH inequality in terms of probabilities \cite{CHSH,Collins2004}, that can be expressed as
\begin{align}
\label{CHSH}
CHSH= &q_{A_0B_0}+q_{A_0B_1}+q_{A_1B_0}\\ \nonumber
&-q_{A_1B_1} - q_{A_0} - q_{B_0} \leq 0
\end{align}
with $q_{A_iB_j}$ being the probability of getting the outcome $0$ if the measurement settings $i,j$ are used, and similarly for the marginals $q_{A_i}$ and $q_{B_j}$. We see that both inequalities are equivalent, if one
just makes the simple replacement $H_{A_iB_j} \rightarrow -q_{A_iB_j}$. Based on this simple observation we prove in Sec. \ref{sec:backward_theorem} a formal explanation to the similarities between the probability and entropic
inequalities.

\subsection{The role of causal structures}

Bell's theorem is usually associated with the incompatibility of
quantum correlations with a natural causal structure for space-like
separated events. However, in the derivation of the entropic
inequality \eqref{ECHSH} no explicit mention of a causal structure has
been made. Inequality \eqref{ECHSH} is valid for any set of $4$
variables. The only assumption made up to this point is the validity of
classical probability theory, or in other terms, the existence of a
well-defined joint probability distribution
$p(A_0=a_0,A_1=a_1,B_0=b_0,B_1=b_1)$. Bell's theorem can be seen as a
recipe  for interpreting the variables appearing in \eqref{ECHSH} or
\eqref{CHSH} as physically observable quantities.

We recall the usual argument:
Bell's theorem assumes a description of marginal models
where there exists a hidden variable $\lambda$
which
subsumes all the information the variables $A_0$, $A_1$, $B_0$,
and $B_1$, may depend on. This is the \emph{realism} assumption in
Bell's construction, assuring that all the variables have well-defined
values prior to any measurement. At each run of the experiment, Alice
and Bob independently chose which variable they will locally access,
tossing, respectively, uncorrelated coins $X$ and $Y$: if $X=0$ Alice
measures the observable associated with $A_0$, if $X=1$ she measures
$A_1$ (similarly to Bob). Because in general $A_0$ and $A_1$
(similarly $B_0$ and $B_1$) are associated with non-commuting
observables, quantum mechanics prohibits both to be jointly
measurable. The compatibility between $A_i$ and $B_j$ is guaranteed
by invoking the assumption of \emph{locality}, stating that space-like
events are not causally connected. Note however, that for example
$A_0$ is in principle not an observable quantity, rather what Alice
observes is $A_0$ conditioned on the fact that $X=0$. If $X$ is
correlated with $\lambda$, potentially the value of $A_0$ would be
different had Alice chosen to measure $A_1$. Here enters the final
assumption in Bell's theorem, that of \emph{measurement independence},
stating that $X$ and $Y$ are independent from the hidden variable
$\lambda$. Together, the three assumptions in Bell's theorem implies
the causal structure shown in Fig. \ref{fig:CHSHCS}.

\begin{figure} [!t]
\centering
\includegraphics[width=0.35\textwidth]{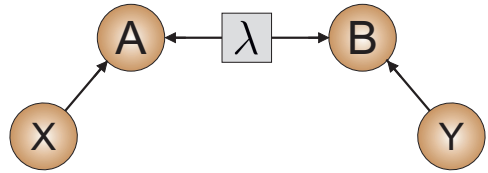}
\caption{
	\label{fig:CHSHCS}
Direct acyclic graph (DAG) representing the causal structure associated with a bipartite Bell experiment \cite{Bell1964}. The associated marginal scenario consists of all sets of variables that do not contain two different observables of the same party. For instance, if $X$ and $Y$ both are dichotomic with $x=0,1$ and $y=0,1$, we have the CHSH scenario \cite{Clauser1969} that is characterized by the marginal scenario $\mathcal{M}=\{\{A_0,B_0\},\{A_0,B_1\},\{A_1,B_0\},\{A_1,B_1\}\}$.
}
\end{figure}

\section{Convex cones}
\label{sec:cones}

In this section, we state several basic facts about closed convex
cones and their duals. Detailed background and proofs can be found in
\cite{aliprantis2007cones}. General text on convexity that also treat
cones are \cite{barvinok2002course, rockafellar1970convex}. All cones
that appear in this paper are closed and convex, so we will at times
drop the attributes.

A \emph{closed convex cone} $C$ is a subset of $\R^n$
\begin{enumerate}
	\item
	closed $\bar C = C$,
	\item
	convex, and
	\item
	scale-invariant: $\lambda C = C$ for every $\lambda\geq 0$.
\end{enumerate}
A simple example is given in
Figure~\ref{fig:acone}.

\begin{figure} [!t]
\centering
\includegraphics[width=0.35\textwidth]{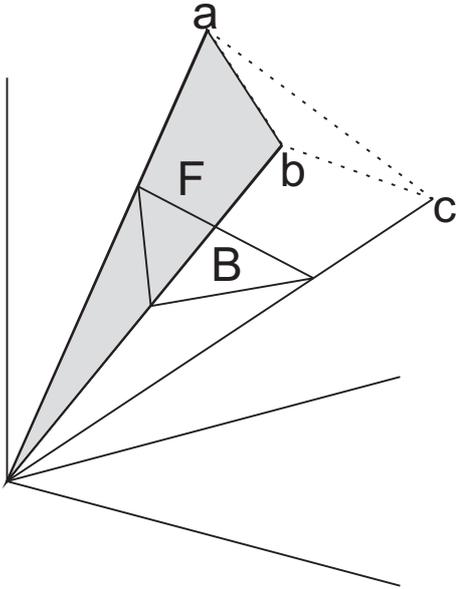}
\caption{
	\label{fig:acone}
	A closed convex cone in $\R^3$. The extremal rays are labeled by $a,
	b, c$, while $B$ designates a base. One of three facets is shaded
	and labeled $F$.
}
\end{figure}

The simplest types of cones are \emph{rays}, i.e.\ sets of the form
$\{ \lambda v \,|\, \lambda \geq 0\}$ for some vector
$v\in\mathbbm{R}^n$. Let $L\subset C$ be a ray contained in a closed
convex cone $C$. It is an \emph{extremal ray} if it cannot be written
as a non-trivial convex combination of elements in $C$, i.e.\ if
for all $x, y \in C$, whenever $\frac12(x+y)\in L$, we already have
that $x, y \in L$.

Under a technical assumption, closed convex cones are the convex hull
of their extremal rays. To state the assumption, we need to introduce
the notion of a \emph{base}. A base is a convex subset $B\subset C$ of
a convex cone $C$ such that $0\not\in B$ and every element $v\in C$ is
uniquely of the form $v=\lambda\,b$ with $\lambda\geq 0$ and $b\in B$.
Not every cone admits a base ($C=\R^n\subset \R^n$, e.g.\ does not).
However, cones which have a compact base are the convex hull of their
extremal rays
\cite[Chapter~9]{barvinok2002course}. This will be true for all cones
that we will deal with in this paper.

In this sense, it is sufficient to specify the extremal rays in order
to specify $C$. Thus, cones that have only finitely many extremal rays
are of particular interest. A cone has this property if and only if
it is the region in $\R^n$ specified by finitely many linear and
homogeneous inequalities \cite[Chapter 3.4]{aliprantis2007cones}.
Such cones are called \emph{polyhedral}. The (closure) of all
achievable entropy vectors is now known \emph{not} to be polyhedral
\cite{matus2007infinitely}. However, the cone $\Gamma_n$ is manifestly defined by
finitely many inequalities and hence polyhedral. The same is true for
all other cones that we will be working with.

There is a powerful notion of \emph{duality} for closed convex cones.
Let $C$ be such a cone. The \emph{dual cone} (also \emph{polar cone})
$C^*$ is the set of all homogeneous linear inequalities valid on $C$:
\begin{equation*}
	C^* = \{ f \,|\, \langle f, v \rangle \geq 0 \quad \forall\, v \in
	C\}.
\end{equation*}
In this language, the set of Shannon-type inequalities is just the
dual cone $\Gamma_n^*$ to $\Gamma_n$. The generating set in
(\ref{shannonineqs_basic})
are the extremal rays of $\Gamma_n^*$.
We will need the following properties of the duality operation:
\begin{enumerate}
	\item
	By the \emph{Bipolar Theorem}, $(C^*)^* = C$ for every closed convex
	cone $C$ \cite[Chapter 4]{barvinok2002course}. In particular, a cone is
	completely specified by its dual.

	\item
	Duality reverses inclusions
	\cite[Chapter 4]{barvinok2002course}:
	If $C, C'$ are closed convex cones and $C' \subset C$ then
	$C^* \subset (C')^*$.

	\item\label{item:contra}
	Dual cones transform ``contragradiently'': Let $C$ be a closed convex
	cone and $D$ a linear map. Then $C':=D(C)$ is again a convex cone and
	\begin{equation}\label{eqn:duality}
		D^T\big((C')^*\big)
		=	
		\operatorname{range} D^T \cap C^*
		\subset C^*,
	\end{equation}
	where $D^T$ is the adjoint of $D$.
\end{enumerate}

\begin{proof}[Proof of Property~\ref{item:contra}]
	Let $C$ be a closed convex cone, $D$ a linear map, and $C'=D(C)$. Then
	\begin{eqnarray*}
		D^T\big( (C')^* \big)
		&=&
		\{ D^T(f) \,|\, \langle f, v' \rangle \geq 0 \quad \forall\, v' \in C'\}
		\\
		&=&
		\{ D^T(f) \,|\, \langle f, D(v) \rangle \geq 0 \quad \forall\, v \in C\} \\
		&=&
		\{ D^T(f) \,|\, \langle D^T(f), v \rangle \geq 0 \quad \forall\, v \in C\} \\
		&=&
		\operatorname{range} D^T \cap C^*.
	\end{eqnarray*}
\end{proof}

\section{The correspondence between probabilistic and entropic inequalities}
\label{sec:backward_theorem}

In this section, we will present a simple geometric construction that
explains and generalizes the connection, observed above, between the
entropic $CHSH^E$ inequality and the usual $CHSH$ inequality. We will
find that the set of probability distributions for $n$ binary
experiments can be imbedded into the cone $\Gamma_n$ of set functions
fulfilling the polymatroidal axioms. Dually, it follows that every
linear inequality valid for $\Gamma_n$ can be turned into an
inequality valid for probability distributions. The linear map that
connects the two types of inequalities will turn out to send $CHSH^E$
to $CHSH$, thus providing a geometric explanation for the observed
coincidence. (Figure~\ref{fig:relations}) provides a high-level
roadmap through the succession of convex cones that appear in the
argument).

\begin{figure*} [!t]
\centering
\includegraphics[width=0.95\textwidth]{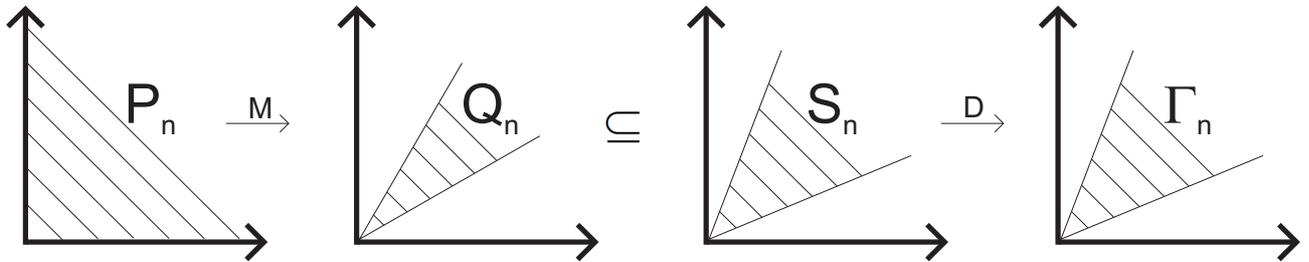}
\caption{
	\label{fig:relations}
	The various cones appearing in the argument that entropic
	inequalities can be mapped to Bell inequalities.
	We start with the positive orthant $P_n$ of $R_n$, which is the
	cone over the set of probability distributions. The M\"obius
	transform $M$ sends it linearly and bijectively to a cone $Q_n$
	($Q_n$, and all further cones that appear later,
	happen to be a sub-cone of the positive orthant. Thus, in this way,
	our
	two-dimensional sketch is faithful). The set of M\"obius-transformed
	distributions is a sub-cone of $S_n$, a cone which fulfills a set of
	``inverted'' polymatroidal axioms. The latter cone can be imbedded
	linearly into the Shannon cone $\Gamma_n$. Using cone duality
	(\ref{eqn:duality}), we can invert the chain above and imbed the
	dual cone $\Gamma_n^*$ into $Q_n^*$. That yields the main claim of
	this section. Note that the initial M\"obius transform is not
	strictly necessary to arrive at $D^T(\Gamma^*) \subset Q_n^*$. We have
	stated it primarily to clarify the geometric nature of $Q_n$ (i.e.\
	as an orthant, up to a linear isomorphism).
}
\end{figure*}

We start by considering various ways of representing the probability distribution of $n$ binary random variables $X_1, \dots, X_n$.Most naturally, the distribution is given by a function
\begin{equation*}
	p: \{ 0, 1\}^{\times n} \to [0,1]
\end{equation*}
on binary strings of length $n$ with the interpretation that
\begin{equation*}
	p(x)=\Pr[X_1 = x_1, \dots, X_n = x_n].
\end{equation*}
Let $x$ be an $n$-bit string. The string is obviously characterized
by the set $A\in 2^{[n]}$ of the positions where it equals $0$. Hence
we can equivalently consider $p$ as a function on the set of subsets
of $[n]$:
\begin{eqnarray*}
	p: 2^{[n]} &\to& [0,1], \\
	p(A) &=& \Pr[X_A = 0 \,\wedge \, X_{A^C}=1],
\end{eqnarray*}
where, again, $X_A$ are those components of the random vector $X$ whose indices appear in
the set $A$. With this convention, $p$ can be seen as an element of
the  real vector space $R_n$ over the powerset of $[n]$. More
precisely, it is an element of the non-negative orthant of $R_n$, and
every element of that orthant corresponds to a (not necessarily
normalized) distribution. We denote the non-negative orthant of $R_n$
by
\begin{equation*}
	P_n := \left\{ p \in R_n \,|\, p_A \geq 0 \quad \forall\,A\in 2^{[n]}
	\right\}.
\end{equation*}

The reason we found it necessary to elaborate on this rather straight-forward correspondence is that the $CHSH$ inequality
(\ref{CHSH}) is given in terms of a different parametrization of probability distributions, which we can now explicitly connect to the
standard one. Indeed, the quantities appearing in (\ref{CHSH}) are
these:
\begin{eqnarray}
	q: 2^{[n]} &\to& [0,1], \nonumber \\
	q(A) &=& \Pr[X_A = 0 ]
	= \sum_{B, A\subset B} p(B). \label{eqn:moebiusforward}
\end{eqnarray}
Equation (\ref{eqn:moebiusforward}) defines a linear map $M: R_n \to
R_n$ such that $q = M p$. A priori, it is not clear that $M$ is
invertible, i.e.\ that one can specify a distribution in terms of the
``$q$-vector'' above. However, that turns out to be true.
In essence, the relation is given by the M\"obius inversion formula
\cite[Chapter 6]{brualdi2012introductory}.

\begin{lemma}
	The linear map $M$ defined by
	\begin{eqnarray*}
		q(A):=(Mq)(A) = \sum_{B, A \subset B} p(B)
	\end{eqnarray*}
	is invertible. Its inverse is given by
	\begin{equation*}
		p(A) := (M^{-1}p)(A) =
		\sum_{B, A \subset B^C} (-1)^{|A|-|B|} q(B^C).
	\end{equation*}
\end{lemma}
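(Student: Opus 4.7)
The plan is to recognize the claim as the Möbius inversion formula on the Boolean lattice $(2^{[n]}, \subseteq)$, and to verify it by direct substitution.

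Since $M$ is a linear endomorphism of the finite-dimensional space $R_n$, invertibility follows once we exhibit a left inverse $N$; I would take $N$ to be the candidate map in the statement. Writing the Möbius inversion in its more familiar form (obtained from the statement by the reindexing $C = B^{c}$),
\begin{equation*}
(Nq)(A) = \sum_{C \supseteq A} (-1)^{|C|-|A|}\, q(C),
\end{equation*}
I would then compute $(NMp)(A)$ by substitution and interchange of summation:
\begin{align*}
(NMp)(A) &= \sum_{C \supseteq A} (-1)^{|C|-|A|} \sum_{D \supseteq C} p(D) \\
&= \sum_{D \supseteq A} p(D) \sum_{A \subseteq C \subseteq D} (-1)^{|C|-|A|}.
\end{align*}

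The inner sum is the standard combinatorial identity on the Boolean lattice: parametrising $C = A \cup E$ with $E \subseteq D \setminus A$ and letting $k = |D \setminus A|$, it reduces to
\begin{equation*}
\sum_{j=0}^{k} \binom{k}{j} (-1)^{j} = (1-1)^{k} = \delta_{k,0},
\end{equation*}
which is $1$ iff $D = A$ and $0$ otherwise. Hence $(NMp)(A) = p(A)$, so $N$ is a left (and therefore two-sided) inverse of $M$.

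An alternative, more transparent derivation proceeds probabilistically. Since $q(B) = \Pr[X_B = 0]$ and $p(A) = \Pr[X_A = 0 \wedge X_{A^{c}} = 1]$, writing the event in $p(A)$ as $\{X_A = 0\}$ with each $\{X_i = 0\}$ for $i \in A^{c}$ excluded, and applying inclusion-exclusion on this latter family, yields
\begin{equation*}
p(A) = \sum_{E \subseteq A^{c}} (-1)^{|E|}\, q(A \cup E),
\end{equation*}
which matches the previous expression under $D = A \cup E$, and reproduces the statement itself after a further reindexing by complementation.

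There is no real obstacle in the argument; the only bookkeeping care required is to keep track of the complementation $B \leftrightarrow B^{c}$ used in the statement, so that the signs and inclusion conditions align with the standard form of Möbius inversion on $(2^{[n]}, \subseteq)$.
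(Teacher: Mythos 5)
Your proof is correct and self-contained, and it executes the argument differently from the paper. The paper reduces the claim to the textbook form of M\"obius inversion: it rewrites $q(A^C)=\sum_{B\subseteq A}p(B^C)$ by complementation and then simply cites the standard inversion formula for downward sums on the Boolean lattice. You instead verify the inversion from scratch, computing $(NMp)(A)$, exchanging the order of summation, and collapsing the inner sum via $\sum_{j}\binom{k}{j}(-1)^j=\delta_{k,0}$; together with the remark that a left inverse of an endomorphism of a finite-dimensional space is two-sided, this gives a complete proof with no external reference. Your probabilistic inclusion--exclusion derivation is a nice independent check that the paper does not include.

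One point deserves care, and it is exactly the complementation bookkeeping you flag at the end. Applying the reindexing $C=B^C$ to the formula as printed in the Lemma gives the sign $(-1)^{|A|-|B|}=(-1)^{|A|+|C|-n}$, which differs from your $(-1)^{|C|-|A|}$ by an overall factor $(-1)^n$. Your $N$ is the correct inverse; the printed formula is not (check $n=1$, $A=\emptyset$: it returns $q(\{1\})-q(\emptyset)=\Pr[X_1=0]-1=-p(\emptyset)$). So the discrepancy is a sign typo in the Lemma as stated --- the exponent should read $|B^C|-|A|$ rather than $|A|-|B|$ (equivalently, $|A^C|-|B|$ in the re-parameterized form appearing in the paper's own proof). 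Consequently your assertion that the ``familiar form'' is obtained from the statement by reindexing is not literally true as written, but the map you actually verify is the right one, and the fault lies with the statement rather than with your argument.
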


The superscript $C$ stands, of course, for the set complement within
$[n]$.

\begin{proof}
	A few manipulations bring the problem into a standard form of the
	M\"obius transformation (we use the notions of
	\cite[Chapter 6.6]{brualdi2012introductory}).
	Using (\ref{eqn:moebiusforward}) and repeatedly re-labeling the sets one sums
	over:
	\begin{eqnarray*}
		q(A^C)
		&=& \sum_{B, A^C \subset B} p(B) \\
		&=& \sum_{B^C, B^C \subset A} p(B) \\
		&=& \sum_{B, B \subset A} p(B^C).
	\end{eqnarray*}
	Thus \cite[(6.10), (6.11)]{brualdi2012introductory} apply with
	$G(A)=q(A^C), F(B)=p(B^C)$. In particular, we find
	\begin{eqnarray*}
		p(A^C) &=& \sum_{B\subset A} (-1)^{|A^C|-|B|} q(B^C)
	\end{eqnarray*}
	which is the stated relation, up to an additional
	re-parameterization of $A\to A^C$.
\end{proof}

The set of non-negative distribution in $q$-representation is thus the
M\"obius transform of the non-negative orthant. We denote it by
\begin{equation*}
	Q_n := M(P_n) = \{ M p, \,|\, p\in P_n\}.
\end{equation*}
The significance of $Q_n$ is that its elements fulfill a set of
``inverted'' polymatroid axioms. In order to state this precisely, we
have to introduced yet another (and final!) cone.

\begin{definitionlemma}\label{deflem:sn}
	The cone $S_n$ is the set of vectors $s\in R_n$ that are
	\begin{enumerate}
		\item
		\emph{non-negative}
		\begin{equation*}
			s_A\geq 0 \quad \forall\,A\in 2^{[n]},
		\end{equation*}

		\item
		\emph{decreasing}
		\begin{equation*}
			s_B \geq s_A \text{ for $B \subseteq A$ },
		\end{equation*}

		\item
		\emph{super-modular}
		\begin{equation*}
			h_{A\cup B} + h_{A \cap B} \geq h_{A} + h_{B}.
		\end{equation*}
	\end{enumerate}

	It holds that $Q_n \subset S_n$.
\end{definitionlemma}

\begin{proof}
	Let $q=Mp$ be the M\"obius transform of a probability distribution.
	We will verify the properties 1.\ -- 3.\ in turn. Since they are
	obviously invariant under re-scaling by a positive number, this
	suffices to conclude $Q_n\subset S_n$.

	Positivity follows directly from the definition of a probability.
	Property~(2) is likewise a straight-forward consequence of
	(\ref{eqn:moebiusforward}): If $B\subset A$, then the probability
	that all $X_B$ are simultaneously zero is certainly larger than or
	equal to the probability  that even all $X_A$ are equal to zero.

	As for super-modularity: For any event $E$, let $\delta(E)$ be the
	``indicator function'' that takes the value $1$ if $E$ occurs and
	$0$ else. The inequality
	\begin{equation*}
		\delta( X_{A\cup B} = 0 ) +
		\delta(X_{A\cap B} = 0 )
		\geq
		\delta(X_A=0)+\delta(X_B = 0)
	\end{equation*}
	holds with probability one. Indeed, as soon as one of the terms on
	the right hand side (r.h.s.) is one, $\delta(X_{A\cap B})$ will also
	be one; if both terms on the r.h.s.\ are one, then so are both
	summands on the l.h.s.
	Super-modularity now follows from taking expectations on both sides.
\end{proof}

The remainder of the argument will proceed as follows: We observe that there is a linear map $D$ that sends $S_n$ onto $\Gamma_n$. It then
follows from elementary convex geometry (Section~\ref{sec:cones}) that the dual map $D^T$ sends linear inequalities valid on $\Gamma_n$ (i.e.\ Shannon-type
inequalities) to linear inequalities valid on $S_n$. Since $Q_n\subset S_n$, the inequalities also hold for M\"obius-transformed probability
distributions. The following statements make this precise.

\begin{lemma}\label{backward}
	Let $D: R_n \to R_n$ be defined by
	\begin{equation*}
		(D s)_A = s_{\emptyset} - s_A.
	\end{equation*}
	Then $D(S_n)\subset \Gamma_n$.
\end{lemma}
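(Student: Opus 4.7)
The plan is to verify directly that for any $s \in S_n$, the image $h := Ds$ satisfies the three defining conditions of the Shannon cone $\Gamma_n$ (non-negativity with $h_\emptyset = 0$, monotonicity, and sub-modularity). Each of these should follow from exactly one of the three defining conditions of $S_n$ (non-negativity, decreasingness, and super-modularity) under the sign flip built into $D$. No optimization or limiting argument is needed; the statement is essentially a bookkeeping exercise about how the minus sign in $(Ds)_A = s_\emptyset - s_A$ interacts with the inverted polymatroid axioms.

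Concretely, I would proceed as follows. First, for the normalization $h_\emptyset = 0$, evaluate $D$ at $A = \emptyset$ and observe that $s_\emptyset - s_\emptyset = 0$. For non-negativity of $h_A$, use that $\emptyset \subseteq A$ together with the decreasing property (2) of $S_n$ to conclude $s_\emptyset \geq s_A$, hence $h_A \geq 0$. Second, for monotonicity $h_B \leq h_A$ when $B \subseteq A$, rewrite the claim as $s_A \leq s_B$; this is again just the decreasing property of $s$. Third, for sub-modularity, expand
\begin{equation*}
h_{A \cup B} + h_{A \cap B} - h_A - h_B = s_A + s_B - s_{A \cup B} - s_{A \cap B},
\end{equation*}
where the $s_\emptyset$ terms cancel, and recognize the right-hand side as $\leq 0$ by the super-modularity axiom (3) of $S_n$.

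There is essentially no obstacle here: the three conditions defining $\Gamma_n$ are obtained from the three conditions defining $S_n$ simply by multiplying by $-1$ (and noting that the constant $s_\emptyset$ cancels from any difference of two $h$-values). The only mild subtlety worth mentioning is that the non-negativity condition for $h$ involves a \emph{constant} shift by $s_\emptyset$ rather than a pure sign reversal, so one needs both the non-negativity and the decreasing property of $s$ to handle the normalization $h_\emptyset = 0$ and the non-negativity $h_A \geq 0$ simultaneously. Once this is observed, the whole argument fits in a few lines and concludes that $D(S_n) \subseteq \Gamma_n$.
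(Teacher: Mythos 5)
Your proposal is correct and follows essentially the same route as the paper's proof: each defining property of $\Gamma_n$ (normalization, non-negativity, monotonicity, sub-modularity) is verified directly from the decreasing and super-modular properties of $s\in S_n$, with the $s_\emptyset$ terms cancelling in the differences. The only tiny imprecision is your closing remark that non-negativity of $s$ is needed; in fact, as your own argument shows, $h_A\geq 0$ follows from the decreasing property alone ($\emptyset\subseteq A$ gives $s_\emptyset\geq s_A$), and the paper likewise never invokes positivity of $s$.
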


\begin{proof}
	Let $A\subseteq B$ and $s\in S_{n}$, then the inequality
	\begin{equation}\label{increasing}
	 	D(s)_{A}=s_{\emptyset}-s_{A}\leq s_{\emptyset}-s_{B}=D(s)_{B}
	\end{equation}
	follows from the fact that vectors $s$ in $S_n$ have decreasing
	components.  The inequality
	\begin{eqnarray}\label{submodular}
		&& D(s)_{A}+D(s)_{B}  \\
		&=& 2s_{\emptyset}-[s_{A}+s_{B}] \nonumber \\
		&\geq& 2s_{\emptyset}-[s_{A\cup B}+s_{A\cap B}]  \nonumber \\
		&=&  D(s)_{A\cup B}+D(s)_{A\cap B}
	\end{eqnarray}
	follows from super-modularity of $s\in S_{n}$. Next,
	\begin{equation}\label{positive}
		D(s)_{A}=s_{\emptyset}-s_{A}\geq 0
	\end{equation}
	follows from the fact that $s$ is decreasing. Finally
	\begin{equation}\label{emptyset}
		D(\emptyset)_{A}=s_{\emptyset}-s_{\emptyset}= 0.
	\end{equation}

	Inequalities \eqref{increasing}, \eqref{submodular},
	\eqref{positive} and \eqref{emptyset} show, respectively, that
	$D(s)$ is monotonously increasing, sub-modular, non-negative and
	that its $\emptyset$-component is zero.
	These are the defining properties of the Shannon-cone. Hence
	$D(s)\in \Gamma_{n}$.
\end{proof}

We thus find that any Shannon-type inequality can be mapped to an
inequality valid for any M\"obius-transformed probability
distribution:

\begin{corollary}
	Let $\mathcal{M}$ be a marginal scenario and let
	$f\in(\Gamma^{\mathcal{M}})^*$
	be a Shannon-type inequality.
	Then
	\begin{equation*}
		D^{T}(f)\in Q_{n}^{*},
	\end{equation*}
	i.e.\ $D^T(f)$ holds for
	M\"obius-transformed probability distributions.
\end{corollary}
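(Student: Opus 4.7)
The strategy is to apply the contragradient duality property \eqref{eqn:duality} of Section \ref{sec:cones} to the chain of inclusions that has just been assembled: $Q_n \subset S_n$ from Definition and Lemma \ref{deflem:sn} combined with $D(S_n) \subset \Gamma_n$ from Lemma \ref{backward} gives $D(Q_n) \subset \Gamma_n$, and dualising this inclusion should be exactly what is needed to transport $\Gamma_n^*$ into $Q_n^*$ via $D^T$. The marginal scenario just adds a bookkeeping step to lift $f$ from $R^{\mathcal{M}}$ to $R_n$.

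Concretely, I would first treat the full case $D^T(\Gamma_n^*) \subset Q_n^*$. Because $D(S_n) \subset \Gamma_n$, the inclusion-reversing property of duality (item 2 of Section \ref{sec:cones}) yields $\Gamma_n^* \subset (D(S_n))^*$. Applying the contragradient formula \eqref{eqn:duality} with $C = S_n$ and $C' = D(S_n)$ gives $D^T((D(S_n))^*) \subset S_n^*$, so in particular
\begin{equation*}
    D^T(\Gamma_n^*) \subset S_n^*.
\end{equation*}
The inclusion $Q_n \subset S_n$ from Definition and Lemma \ref{deflem:sn}, dualised once more, gives $S_n^* \subset Q_n^*$, and combining these two steps delivers $D^T(\Gamma_n^*) \subset Q_n^*$.

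To promote this to the marginal setting, I would let $\pi: R_n \to R^{\mathcal{M}}$ be the coordinate projection that implements $\pi(\Gamma_n) = \Gamma^{\mathcal{M}}$. Applying the contragradient formula \eqref{eqn:duality} once more, this time to $\pi$, gives $\pi^T((\Gamma^{\mathcal{M}})^*) \subset \Gamma_n^*$, so that any $f \in (\Gamma^{\mathcal{M}})^*$ can be identified (by padding its coefficients outside $\mathcal{M}$ with zero) with a Shannon-type inequality on the full Shannon cone $\Gamma_n$. Composing with the inclusion already established yields $D^T(f) \in Q_n^*$, which is the claim.

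The substantive content is really carried by $D(Q_n) \subset \Gamma_n$, which is in hand through the two preceding lemmas, and the rest is abstract convex duality. The only point where one must be slightly careful is in interpreting the symbol $D^T(f)$ when $f$ lives a priori on the smaller space $R^{\mathcal{M}}$; this is why I would spell out the zero-padding step via $\pi^T$ rather than absorbing it silently into the notation.
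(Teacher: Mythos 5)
Your proposal is correct and follows essentially the same route as the paper: combine $D(S_n)\subset\Gamma_n$ and $Q_n\subset S_n$ with the inclusion-reversing and contragradient properties of duality to get $D^T(\Gamma_n^*)\subseteq S_n^*\subset Q_n^*$, then use $(\Gamma^{\mathcal{M}})^*\subset\Gamma_n^*$. Your explicit treatment of the zero-padding via $\pi^T$ is a slightly more careful spelling-out of a step the paper simply asserts, but it is not a different argument.
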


\begin{proof}
	We combine properties 2.\ and 3.\ of cone duality as stated in  Section~\ref{sec:cones} with
	Lemmas~\ref{deflem:sn}, \ref{backward} to obtain
	\begin{equation*}
		D^{T}(\Gamma_{n}^{*})\subseteq S_{n}^{*}\subset Q_{n}^{*}.
	\end{equation*}
	Since $(\Gamma^{\mathcal{M}})^* \subset \Gamma_n^*$ for any marginal
	scenario
	$\mathcal{M}$, we are done.
\end{proof}

\subsection{Discussion}

The space $R_n$ is equipped with a basis $e_A, A\in 2^{[n]}$ labeled by
subsets of $\{1, \dots, n\}$. If one orders the basis in any way such
that $e_\emptyset$ is the first element, then the linear map $D$ takes
the form
\begin{equation}
 D=\begin{pmatrix}
0 & 0 & 0 & 0 &  0 & \dots \\
1 & -1 & 0 & 0 & 0 & \dots \\
1 & 0 & -1 & 0 & 0 & \dots \\
1 & 0 & 0 & -1 & 0 & \dots \\
\vdots &&&& \ddots
\end{pmatrix}
\end{equation}
and its transpose is
\begin{equation}
 D^T=\begin{pmatrix}
0 & 1 & 1 & 1 &  1 & \dots \\
0 & -1 & 0 & 0 & 0 & \dots \\
0 & 0 & -1 & 0 & 0 & \dots \\
0 & 0 & 0 & -1 & 0 & \dots \\
\vdots &&&& \ddots
\end{pmatrix}
\end{equation}
Written as a vector, the
entropic $CHSH^E$ inequality (\ref{ECHSH}) reads
\begin{equation*}
	f
	=
	e_{\{A_0, B_0\}}+
	e_{\{A_0, B_1\}}+
	e_{\{A_1, B_0\}}-
	e_{\{A_1, B_1\}}-
	e_{\{A_0\}}-
	e_{\{B_0\}}.
\end{equation*}
Because the coefficients sum to zero, $(D^T f)_\emptyset=0$ and hence
$D^T f = - f$, which is the vector representing the ordinary $CHSH$
inequality (\ref{CHSH}). We have thus indeed geometrically explained
the coincidence observed initially.

We remark that the inclusion $D^T(\Gamma^*_n) \subset Q_n^*$ is not
tight in general. I.e.\ it is not the case that all inequalities for
$Q_{n}$ can be obtained from those of $\Gamma_n$. Geometrically, this
would be surprising, as $Q_n$ is just an orthant, while $\Gamma_n$
seems to be a more complicated geometrical object. It is indeed simple
to find  explicit counter-examples:

Consider a specific inequality, for instance the Mermin inequality for
tripartite correlations \cite{Mermin1990}, it is possible to gain a
better intuition.  This is an example of an inequality that is valid
on q functions but can not be translated into an entropic inequality.
The reason is that for its derivation $32$ independent inequalities
(arising from positivity of some probability distribution) are needed;
one being the positivity of
$q_{\Omega}\equiv q_{\{A_{0},A_{1},B_{0},B_{1},C_{0},C_{1}\}}$, three correspond to the
decreasing property, six are super-modularities and there are other
$22$ inequalities that cannot be translated into Shannon type
inequalities.

\section{Collins-Gisin entropic inequalities with and without bounded shared randomness}
\label{sec:Imm}
In this section we derive an entropic version of the Collins-Gisin (CG) inequalities \cite{Collins2004}, concerning a bipartite scenario where each party, say Alice and Bob, can choose between $m$ measurement settings each.
We further derive a different version of these inequalities that take into account a bounded amount of shared randomness between the parties.

The CG inequalities are typically written in the form $ I_{mm22} \leq 0$, where for $m=2$ this corresponds to the CHSH inequality \cite{Clauser1969}. The notation of the inequality $ I_{mm22}$ stresses that each party has
access to $m$ possible measurement settings with $2$ outcomes each. For $m=3$ it has been shown that these inequalities are useful since they can detect the nonlocality of states that can not be detected by the CHSH
inequality \cite{Collins2004}. Moreover, as shown in \cite{Avis2005}, the $I_{mm22}$ are tight Bell inequalities, that is, they correspond to facets of the local polytope.

The $I_{mm22}$ inequality can be written compactly using the following matrix notation
\begin{equation}
\label{matrix_notation}
\left(
\begin{tabular}{c||cccc}
   & $q_{A_0}$ & $q_{A_1}$ & $\cdots$ & $q_{A_{m-1}}$ \\
  \hline \hline
  $q_{B_0}$ & $q_{A_0B_0}$ & $q_{A_1B_0}$ & $\cdots$ & $q_{A_{m-1}B_0}$ \\
  $q_{B_1}$ & $q_{A_0B_1}$ & $q_{A_1B_1}$ & $\cdots$ & $q_{A_{m-1}B_1}$ \\
  $\vdots$ & $\vdots$ & $\vdots$ & $\ddots$ & $\cdots$ \\
  $q_{B_{m-1}}$ & $q_{A_0B_{m-1}}$ & $q_{A_1B_{m-1}}$ & $\cdots$ & $q_{A_{m-1}B_{m-1}}$ \\
\end{tabular}
\right).
\end{equation}
Using this matrix notation the $I_{mm22}$ inequality can be written as
\begin{equation}
\label{Imm_prob}
I_{mm22} = \left(
\begin{tabular}{c||cccccc}
   & -1 & 0 & 0 & 0 & 0 & 0\\
  \hline \hline
  -(m-1) & 1 & 1 & $\cdots$ & 1 & 1 & 1\\
  -(m-2) & 1 & 1 & $\cdots$ & 1 & 1& -1 \\
  -(m-3) & 1 & 1 & $\cdots$ & 1 & -1 & 0\\
  $\vdots$ & $\vdots$ &  $\vdots$ & $\vdots$ & $\ddots$ & $\vdots$ & $\vdots$\\
  0 & 1 & -1 & 0 &  $\cdots$ & 0 & 0\\
\end{tabular}
\right) \leq 0.
\end{equation}
It is important to stress the difference between the way one proves the validity of an entropic inequality and the validity of a probability inequality. In general to prove that a probability inequality is valid, one uses the
information about the extreme points of the local polytope, that is, all the deterministic functions assigning values to the outcomes. In turn, as stressed before, little is known about the extremal rays of the Shannon-type
entropic cone (apart from simple cases ~\cite{Kashiwabara1996,Shapley72,Studeny2000}). In the absence of information about the extremal rays of the entropic cone, the only way we can prove that the entropic inequality is
valid is to use the linear programm approach of Yeung \cite{Yeung2008}. If the extremal rays are known, a very similar approach to the one used in correlation polytopes \cite{Pitowsky1989} can also be used in the entropic
case (See Appendix \ref{find_facets} for further details).

We start considering the case $m=3$. From Theorem 6 and the corresponding translation rule $H_{A_iB_j} \rightarrow -q_{A_iB_j}$ one could expected that the entropic analogous of $I_{3322}$, that we label as $I^{E}_{33}$,
could be given by
\begin{align}
\label{I3322}
I^{E}_{33}=&H_{A_0B_0} +H_{A_1B_0} +H_{A_2B_0} + H_{A_0B_1} \\ \nonumber
&  +H_{A_1B_1} -H_{A_2B_1} +H_{A_0B_2} - H_{A_1B_2} \\ \nonumber
& -2H_{B_0} -H_{B_1}-H_{A_0} \geq 0
\end{align}
This is indeed the case as this inequality can be obtained by the combination of the following basic inequalities
\begin{align}
\label{pentagonal}
 &H_{A_0A_1B_0}        +H_{A_0A_1B_1}  \geq H_{A_0A_1B_0B_1} + H_{A_0A_1} \\
 &H_{A_0B_0}        +H_{A_1B_0}  \geq H_{A_0A_1B_0} + H_{B_0} \\
 &H_{A_0B_1}        +H_{A_1B_1}  \geq H_{A_0A_1B_1} + H_{B_1}\\
 &H_{A_2B_0}+ H_{B_0B_1} \geq H_{A_2B_0B_1}+H_{B_0}\\
 &H_{A_0B_2}+ H_{A_0A_1} \geq H_{A_0A_1B_2}+H_{A_0}
\end{align}
together with the following monotonicity inequalities $H_{A_2B_0B_1} \geq H_{A_2B_1}$, $H_{A_0A_1B_2} \geq H_{A_1B_2}$ and $H_{A_0A_1B_0B_1}\geq H_{B_0B_1} $ (remember that all monotonicity inequalities can be obtained by
the basic inequalities). The notation of the inequality $ I^{E}_{mm}$ stresses that each party $A$ and $B$ has access to $m$ possible measurement settings with any number of possible outcomes, in contrast to the $I_{mm22}$
inequalities that are only valid for dichotomic observables. As discussed in the introduction this outcome size independence is an advantage of the entropic inequalities over the probabilistic ones.

In the Appendix \ref{Imm_app} it is proven, proceeding with a similar FM elimination as the one sketched above, that the CG inequalities are valid for entropies if one simply applies the transformation rule $H_{A_iB_j}
\rightarrow -q_{A_iB_j}$, that is,
\begin{equation}
\label{Imm_ent}
I^{E}_{mm} = \left(
\begin{tabular}{c||cccccc}
   & 1 & 0 & 0 & 0 & 0 & 0\\
  \hline \hline
  (m-1) & -1 & -1 & $\cdots$ & -1 & -1 & -1\\
  (m-2) & -1 & -1 & $\cdots$ & -1 & -1& 1 \\
  (m-3) & -1 & -1 & $\cdots$ & -1 & 1 & 0\\
  $\vdots$ & $\vdots$ &  $\vdots$ & $\vdots$ & $\ddots$ & $\vdots$ & $\vdots$\\
  0 & -1 & 1 & 0 &  $\cdots$ & 0 & 0\\
\end{tabular}
\right) \leq 0,
\end{equation}
were we have used a similar notation to the one in \eqnref{matrix_notation}. From Theorem 6, this also implies that the $I_{mm22}$ inequalities \eqref{Imm_prob} can be derived relying exclusively on the inverse polymatroidal
axioms.

Given the inequality \eqnref{Imm_ent} the first question one needs to answer is if it is able to witness nonlocal correlations. For the usual CG inequality \eqref{Imm_prob} the maximal violation is achieved by the nonlocal
non-signalling distribution
\begin{equation}
\label{pm}
p_m\left(  a,b | x,y \right)  =\left\{
\begin{array}{ll}
1/2 & \text{, } a\oplus b=1 \text{, } x+y=m   \\
1/2 & \text{, } a\oplus b=0 \text{, } x+y \neq m   \\
0 & \text{, otherwise}%
\end{array}
\right. ,
\end{equation}
that can be understood as a generalization of the paradigmatic PR-box \cite{Popescu1994} for $m$ measurement settings. If we directly compute the value of $I^{E}_{mm}$ for the distribution $p_m$ we find no violations. This
is no surprise since entropies are unable to distinguish between correlations and anti-correlations; for example, $p_m$ is entropically equivalent to the classically correlated distribution
\begin{equation}
\label{pc}
p_c\left(  a,b | x,y \right)  =\left\{
\begin{array}{ll}
1/2 & \text{, } a\oplus b=0   \\
0 & \text{, otherwise}%
\end{array}
\right. .
\end{equation}
In order to find violations of the entropic inequalities one needs to find a way of entropically distinguishing correlations from anti-correlations. As shown in \cite{Chaves2013} one way to do that is to make use of shared
randomness between the parties. Consider two distributions $p_c$ and $p_a$ that have, respectively, correlated outputs ($ a\oplus b =0$) and anti-correlated outputs ($ a\oplus b =1$), whatever the inputs. Entropically both
distributions are indistinguishable but if we allow the parties to make use of some extra shared randomness then we can tell apart both distributions. For example, mixing with equal probabilities the distributions with an
independent copy of $p_c$, we see that $p_c$ remains unchanged while $p_a$ is turned into a uncorrelated distribution. Similarly if we mix $\frac{1}{2} p_m + \frac{1}{2} p_c$ we see that $I^{E}_{mm}=m-1$, a violation of the
entropic inequality that can be proven to be optimal, that is, in some sense (allowing the use of shared randomness) the maximally nonlocal probability distribution is also the maximally entropically nonlocal.

To prove the maximal violation of $I^{E}_{mm} \leq 0$ we first consider the maximum algebraic value that the operator $I^{E}_{mm}$ can achieve, that turns out to be the same as the one obtained with the distribution
$\frac{1}{2} p_m + \frac{1}{2} p_c$. In order to understand the maximal violations $I^{E}_{mm}$ let us rewrite it in terms of mutual informations
\begin{equation}
\label{I_I}
I^{E}_{mm} = \left(
\begin{tabular}{c||cccccc}
   & -(m-2) & -(m-3) & -(m-4) & $\cdots$ & -1 & 0\\
  \hline \hline
  -1 & 1 & 1 & $\cdots$ & 1 & 1 & 1\\
  0 & 1 & 1 & $\cdots$ & 1 & 1& -1 \\
  0 & 1 & 1 & $\cdots$ & 1 & -1 & 0\\
  $\vdots$ & $\vdots$ &  $\vdots$ & $\vdots$ & $\ddots$ & $\vdots$ & $\vdots$\\
  0 & 1 & -1 & 0 &  $\cdots$ & 0 & 0\\
\end{tabular}
\right) \leq 0 .
\end{equation}
where we have used the matrix notation
\begin{equation}
\label{matrix_notation2}
\left(
\begin{tabular}{c||cccc}
   & $H_{A_0}$ & $H_{A_1}$ & $\cdots$ & $H_{A_{m-1}}$ \\
  \hline \hline
  $H_{B_0}$ & $I_{A_0:B_0}$ & $I_{A_1:B_0}$ & $\cdots$ & $I_{A_{m-1}:B_0}$ \\
  $H_{B_1}$ & $I_{A_0:B_1}$ & $I_{A_1:B_1}$ & $\cdots$ & $I_{A_{m-1}:B_1}$ \\
  $\vdots$ & $\vdots$ & $\vdots$ & $\ddots$ & $\cdots$ \\
  $H_{B_{m-1}}$ & $I_{A_0:B_{m-1}}$ & $I_{A_1:B_{m-1}}$ & $\cdots$ & $I_{A_{m-1}:B_{m-1}}$ \\
\end{tabular}
\right).
\end{equation}
Using that $I_{A_iB_j}-H_{A_i} \leq 0$ and $I_{A_0B_0}-H_{B_0} \leq 0$  we see that the maximum violation is given by $I^{E}_{mm}= -\textstyle{\sum}_{i=0,\dots,m-2} I_{A_{1+i}:B_{m-1-i}}+\textstyle{\sum}_{i=0,\dots,m-2}
I_{A_{1+i}:B_{m-2-i}} \leq \textstyle{\sum}_{i=0,\dots,m-2} H_{A_{1+i}} $. For dichotomic observables it turns out that the maximal violation is given by $I^{E}_{mm} \leq m-1$.

\subsection{Entropic CG inequality with bounded shared randomness}
\label{subsec:boundedlambda}

With the locality and realism assumption any correlation displayed between $A$ and $B$ can only occur through the hidden variable $\lambda$. The variable $\lambda$ is the common ancestor to all the
observable quantities. That means that any correlation shown between $A$ and $B$ must be screened off if we know the actual value of $\lambda$. Mathematically this corresponds to say that $I_{A_0A_1:B_0B_1|\lambda}=0$
(similarly to all subsets, for example, $I_{A_0:B_0|\lambda}=0$) or in other terms $H_{A_0A_1B_0B_1|\lambda}=H_{A_0A_1|\lambda}+H_{B_0B_1|\lambda}$ that can be rewritten as  $H_{A_0A_1B_0B_1\lambda} + H_{\lambda}
=H_{A_0A_1\lambda}+H_{B_0B_1\lambda}$. Remember that the mutual information between two variables can be expressed in terms of Shannon entropies as $I_{A:B}=H_{A}+H_{B}-H_{AB}$ and similarly
$I_{A:B|\lambda}=H_{A\lambda}+H_{B\lambda}-H_{AB\lambda}-H_{\lambda}$. Note that we allow the parties to have access to local randomness, that is, $H_{A_0A_1|\lambda}$ (similarly to B) not necessarily is equal to $0$. Our
aim is to bound the entropy of the hidden variable to be $H_{\lambda} \leq \mathcal{C}$.

Such a restriction fits naturally in the entropic approach to marginal models, since the considerations about finite shared randomness are equivalent to extra linear constraints that still define an entropic cone. In practice we start considering all the polymatroidal axioms describing the cone $\Gamma_{n+1}$, corresponding to all $n$ variables of the marginal scenario plus the
hidden variable $\lambda$. We add to this set of basic inequalities the ones that contain the information about the causal structure of the experiment, that is, saying that $\lambda$ is the only common ancestor to all the
space-like separated variables and also the inequality bounding the entropy of the hidden variable. Formally, this means we add to the set of basic inequalities, the following inequalities
\begin{eqnarray}
\label{lambdadet}
& I_{A:B|\lambda}=0 \\
\label{boundlambda}
& H_{\lambda} \leq \mathcal{C}
\end{eqnarray}
where $A=(A_{0},...,A_{m-1})$ and $B=(B_{0},...,B_{m-1})$.

The first step in the FM elimination is to eliminate the hidden variable $\lambda$. Note that since \eqref{boundlambda} is the only inequality that depends on $\mathcal{C}$, after the FM elimination of the variable $\lambda$
any non-trivial inequality depending on the amount of shared randomness should appear as the sum of this inequality with some of the other inequalities. To begin with, we now prove that
\begin{equation}
\label{boundedlambda}
H_{A}+H_{B}\leq H_{AB}+\mathcal{C},
\end{equation}
where again $A=(A_{0},...,A_{m-1})$ and $B=(B_{0},...,B_{m-1})$.

To obtain \eqref{boundedlambda} we add the independence condition $I_{A:B|\lambda}=0$ with one basic submodularity inequality, one basic monotonicity inequality and the bound on $H(\lambda)$:
\begin{align}
 &H_{A,\lambda}+H_{B,\lambda}=H_{A,B,\lambda}+H_{\lambda} \\
 &H_{A,B,\lambda}+H_{A}\leq H_{A,B}+H_{A,\lambda} \\
 &H_{B}\leq H_{B,\lambda} \\
 &H_{\lambda}\leq \mathcal{C}
\end{align}
Note that in the limit that $\mathcal{C}=0$, since $ H_{B} +H_{A} \geq H_{A B}$ this implies that $ H_{B} +H_{A} = H_{AB}$, that is, no correlations between A and B are possible, as one should expect. We have checked
computationally that \eqref{boundedlambda} is the only extra facet inequality to the usual basic set one gets after eliminating $\lambda$ for the CHSH scenario. We believe this is still the case for scenarios with more
measurement settings but we do not have a formal proof of that.

Note that all the terms appearing in \eqref{boundedlambda} involve non-observable quantities and should then be eliminated. Our approach here is to add basic inequalities in such a way that we eliminate all the
non-observable quantities. Combining the following basic inequalities
\begin{align}
\label{ECHSHsharedcombine}
 &H_{A_1B_1} +H_{A_0A_1B_0B_1}         \leq H_{A_1B_0B_1}        +H_{A_0A_1B_1} \\
 &H_{A_0}       +H_{A_0A_1B_1}         \leq H_{A_0A_1} + H_{A_0B_1}          \\
 &H_{B_0}     +H_{A_1B_0B_1}   \leq  H_{B_0B_1}  +H_{A_1B_0}
\end{align}
with inequality \eqref{boundedlambda} we obtain
\begin{align}
\label{ECHSHshared}
\nonumber
BI^{E}_{22} & =-H_{A_0B_1}-H_{A_1B_0}+H_{A_1B_1}+ H_{A_0} +H_{B_0}  \leq \mathcal{C} \\
& =I_{A_0B_1}+I_{A_1B_0}-I_{A_1B_1} - H_{A_0}  \leq \mathcal{C}
\end{align}
that one can regard as the entropic CHSH with bounded shared randomness.

For general $m$, as proven in the Appendix, the following inequality can be regarded as the entropic CG inequality with bounded shared randomness,
\begin{equation}
\label{BI_H}
BI^{E}_{mm} = \left(
\begin{tabular}{c||cccccc}
   & 1 & 0 & 0 & $\cdots$ & 0 & 0\\
  \hline \hline
  1 & 0 & 0 & $\cdots$ & 0 & 0 & -1\\
  m-2 & -1 & -1 & $\cdots$ & -1 & -1& 1 \\
  m-3 & -1 & -1 & $\cdots$ & -1 & 1 & 0\\
  $\vdots$ & $\vdots$ &  $\vdots$ & $\vdots$ & $\ddots$ & $\vdots$ & $\vdots$\\
  0 & -1 & 1 & 0 &  $\cdots$ & 0 & 0\\
\end{tabular}
\right) \leq \mathcal{C} .
\end{equation}
where once more we have used a matrix notation similar to the one in \eqnref{matrix_notation}. In terms of the mutual information, the inequality can be written as (using the matrix notation \eqref{matrix_notation2})
\begin{equation}
\label{BI_I}
BI^{E}_{mm} = \left(
\begin{tabular}{c||cccccc}
   & -(m-2) & -(m-3) & -(m-4) & $\cdots$ & 0 & 0\\
  \hline \hline
  0 & 0 & 0 & $\cdots$ & 0 & 0 & 1\\
  0 & 1 & 1 & $\cdots$ & 1 & 1& -1 \\
  0 & 1 & 1 & $\cdots$ & 1 & -1 & 0\\
  $\vdots$ & $\vdots$ &  $\vdots$ & $\vdots$ & $\ddots$ & $\vdots$ & $\vdots$\\
  0 & 1 & -1 & 0 &  $\cdots$ & 0 & 0\\
\end{tabular}
\right) \leq \mathcal{C} .
\end{equation}

In order to understand the violation of the $BI^{E}_{mm}$ and its relation to $I^{E}_{mm}$ we first note that $BI^{E}_{m,m}=I^{E}_{m-1,m-1}+I_{A_{m-1},B_0}-I_{A_{m-1},B_1}+H_{B_0} \leq H_{\lambda}$. For any local
distribution $I^{E}_{m-1,m-1} \leq 0$ and we have that $I_{A_{m-1},B_0}-I_{A_{m-1},B_1}+H_{B_0} \leq H_{B_0}+\min{\left(H_{A_{m-1}},H_{B_0}\right)} $, that means that any local distribution is bounded by $BI^{E}_{mm}\leq
H_{B_0}+\min{\left(H_{A_{m-1}},H_{B_0}\right)}$. Consider for instance dichotomic observables such that for local distributions $BI^{E}_{mm}\leq 2$. That means that independently of how many measurement settings one employs
the inequality will be saturated resorting to not more than only two bits of shared randomness.

\section{Multipartite Scenarios}
\label{sec:Multi}
We start considering the simplest multipartite scenario, consisting of $3$ parties with $2$ measurement settings each. In terms of the correlation polytope it is known that there are $46$ different classes of
inequalities \cite{Sliwa2003}. As we discuss in Sec. \ref{sec:computational_results} the FM elimination method to obtain the entropic inequalities bounding the marginal scenario is too demanding and we were not able to
finish the computation.

To circumvent this limitation we proceed to derive a non-trivial inequality using the chain rule for entropies, a similar approach originally employed to derive the entropic CHSH inequality \cite{Braunstein1988}. Remember
that a marginal model in accordance with a LHV description assures the existence of the joint full probability distribution $p(a_{x=1},b_{y=1},c_{z=1},a_{x=0},b_{y=0},c_{z=0})$, with $x$, $y$, and $z$ describing the measurement
choices, for example $x=0$ corresponds to Alice measuring the observable $A_0$. The existence of the joint full distribution in turns imply the existence of the joint full entropy $H(A_1,B_1,C_1,A_0,B_0,C_0)$. Using the
chain rule for the entropies we have that
\begin{eqnarray}
\nonumber
& H_{A_1B_1C_1A_0B_0C_0}=H_{A_1|B_1C_1A_0B_0C_0}+H_{B_1|C_1A_0B_0C_0} \\
& + H_{C_1|A_0B_0C_0}+H_{A_0|B_0C_0}+H_{B_0|C_0}+H_{C_0},
\end{eqnarray}
that in turns implies that
\begin{eqnarray}
\label{M3}
\nonumber
& M_{3}= H_{A_1B_1C_1}-H_{A_1B_0C_0}-H_{A_0B_1C_0} - H_{A_0B_0C_1} \\
& -H_{A_0B_0C_0}+H_{A_0B_0}+H_{A_0C_0}+H_{B_0C_0} \leq 0 ,
\end{eqnarray}
where we have simply used the monotonicity of the Shannon entropy and the fact that conditioning on a variable cannot increase the entropy, that is, $H_{A} \leq H_{AB}$ and $H_{A|BC} \leq H_{A|C}$. Note that the chain rule
for entropies and the two other used properties aforementioned are Shannon type relations and as so the inequality $M_{3} \leq 0$ can also be derived from the basic set of inequalities \eqref{shannonineqs_basic}. Given the
inequality \eqnref{M3} the first question one needs to answer is if the inequality is able to detect genuine tripartite nonlocal correlations. To show that we consider the two kinds of genuine tripartite nonlocal correlations
introduced in \cite{Barrett2005}:
\begin{equation}
\label{p1}
p_1\left(  a,b,c | x,y,z \right)  =\left\{
\begin{array}{ll}
1/4 & \text{, } a\oplus b \oplus c =xyz\\
0 & \text{, otherwise}%
\end{array}
\right. ,
\end{equation}
and
\begin{equation}
p_2\left(  a,b,c | x,y,z \right)  =\left\{
\begin{array}{ll}
1/4 & \text{, } a\oplus b \oplus c =xy\oplus xz \oplus yz \\
0 & \text{, otherwise}%
\end{array}
\right. .
\end{equation}
If we compute the value $M_{3}$ for these distributions we find no violations. As mentioned before, this comes as no surprise since entropies are unable to distinguish between correlations and anti-correlations; for example,
both distributions $p_1$ and $p_2$ are entropically equivalent to the classically correlated distribution
\begin{equation}
p_c\left(  a,b,c | x,y,z \right)  =\left\{
\begin{array}{ll}
1/4 & \text{, } a\oplus b \oplus c =0\\
0 & \text{, otherwise}%
\end{array}
\right. .
\end{equation}
As discussed before one way to make the distinction between correlation and anti-correlation from the entropic perspective is to use classical shared randomness. If we just mix the distributions $p_1$ and $p_2$ with $p_c$,
for example equally mixing them with the same probability of $1/2$, one can straightforwardly compute the value of the operator to be $M_{3}=1$ for both distributions, a violation of the inequality \eqnref{M3} that therefore
witnesses the non-local behaviour of the distributions.

A nice feature of the inequality \eqnref{M3} is that it can be easily generalized for any number of parties $N$. Once more, just making use of the chain rule, the monotonicity of the Shannon entropy and the fact that
conditioning on a variable cannot increase the entropy we arrive at
\begin{eqnarray}
\label{Mn}
\nonumber
& M_{n}= H_{X^1_1 \dots X^N_1}-H_{X^1_0 \dots X^N_0} - P(H_{X^1_1 X^2_0 \dots X^N_0}) \\
& + P(H_{X^1_0 \dots X^{N-1}_0}) \leq 0,
\end{eqnarray}
where now we have used the notation $X^j_i$ to label the i-th observable of the j-th party with $i=\left\{ 0,1 \right\}$ and $j=1, \dots, N$. The operator P stands for all the different permutations of the parties, for
example for $N=3$ $P(H_{X^1_1 X^2_0 X^3_0})=H_{X^1_1 X^2_0 X^3_0}+H_{X^1_0 X^2_1 X^3_0}+H_{X^1_0 X^2_0 X^3_1}$ and  $P(H_{X^1_0 X^2_0})=H_{X^1_0 X^2_0}+H_{X^1_0 X^3_0}+H_{X^2_0 X^3_0}$. It is easy to see that \eqnref{Mn} is
violated by a generalization of the distribution \eqnref{p1} for more parties, given by
\begin{widetext}
\begin{equation}
\label{p1N}
p^N_1\left(  x^1,\dots,x^n | X^1,\dots,X^N \right)  =\left\{
\begin{array}{ll}
1/4 & \text{, } x^1\oplus \dots \oplus x^N =X^1 \dots X^N\\
0 & \text{, otherwise}%
\end{array}
\right. ,
\end{equation}
\end{widetext}
if we just mix it with the classical correlated distribution ($ x^1\oplus \cdots \oplus x^N =0$).

A nice feature of the entropic inequalities is that they can be readily applied to marginal scenarios with an arbitrary number of outcomes. This is in sharp contrast to the usual Bell inequalities approach where increasing
the number of outcomes also increases the complexity and dimension of the correlation polytopes. To our knowledge very few inequalities have been derived for marginal multipartite Bell scenarios with many outcomes, in
particular in Ref. \cite{Grandjean2012} tripartite inequalities have been derived for any number of outcomes, but as the authors stress there is no straightforward generalization of their methods to more parties (Also note
the Ref. \cite{Arnault2011}, but there the inequalities involve products of observables from the same party and therefore have no direct application to Bell scenarios). Entropic inequalities may be proven as a useful tool in
such cases. We have briefly explored this possibility by looking for quantum violations of the inequality \eqnref{Mn} using multidimensional GHZ states given by
\begin{equation}
\label{GHZd}
\ket{GHZ^N_d}= \sqrt{\frac{1}{d}} \sum_{i=0,\dots,d-1}\ket{i_1 \dots i_N}
\end{equation}
and employing the Fourier-transformed measurements used in \cite{CGLMP2002}. We have considered $N=3,\dots,10$ and $d=2,\dots,10$ and found that the violation of \eqnref{Mn} increases with both $N$ and $d$. As numerically
noted in \cite{Grandjean2012}, the maximal quantum violation for the inequalities considered there can be reached only using systems with local Hilbert space dimension exceeding the number of measurement outcomes; what
suggests that this kind of inequalities can be used as multipartite dimension witnesses \cite{Brunner2008}. We believe this is an interesting line of research one may pursue in the entropic approach.

\section{Computational Results}
\label{sec:computational_results}
In Sec. \ref{sec:Imm} we have used a specific combination of the basic inequalities in order to derive the entropic inequalities \eqref{I_I} and \eqref{BI_H}. However in principle different combinations could give rise to
different classes of entropic inequalities. To understand what other classes of inequalities one gets, we rely in this section on computational results. Using standard software to perform the Fourier-Motzkin elimination we
computed all classes of entropic inequalities for the simplest marginal models where the computation is expected to finish in a reasonable time.

It turns out that even for very simple scenarios involving more than $5$ variables the computations are already too large to finish. To go beyond that limitation one needs to further simplify the set of basic inequalities.
In order to do that we follow the approach proposed in Ref. \cite{Budroni2012} for usual Bell inequalities. Let us begin considering a bipartite scenario, with $m$ measurement settings for Alice and $n$ for Bob. The
existence of a classical description for all pairwise observables is equivalent to the existence of classical descriptions for the $n$ subsystems, $\left\{A_0,\cdots,A_{m-1},B_j\right\}$ with $j=0,\cdots,n-1$, coinciding on
$\left\{A_0,\cdots,A_{m-1}\right\}$. To find all the entropic inequalities for the marginal scenario it is then sufficient to start out with the union of the basic set of inequalities defining each one of the $n$ subsystems
(see Fig. \ref{fig:pieces1}). A further simplification is possible since for each of the subsystems (indexed by $j$) it is sufficient to consider only the inequalities involving the subsets of the variables $\left\{
\left\{A_i,B_j\right\} ,\left\{A_0,\cdots,A_m\right\} \right\}$ with $i=0,\dots, m-1$. That is, in practice we start with the set of Shannon-type inequalities describing the cone $\left\{A_0,\cdots,A_{m-1},B_j\right\}$ and
project it down to the to the cone describing $\left\{ \left\{A_i,B_j\right\}, \left\{A_0,\cdots,A_m\right\} \right\}$. With this simplification we were able to fully characterize bipartite marginal models involving up to
$7$ variables, also accounting for the effects of bounded shared randomness.

Using similar simplifications, we also obtain inequalities for marginal scenarios involving statistical independencies. Details are given in the Sec. \ref{subsec:mar_stat_inde} below.

\begin{figure}[t!]
\includegraphics[width=0.8\linewidth]{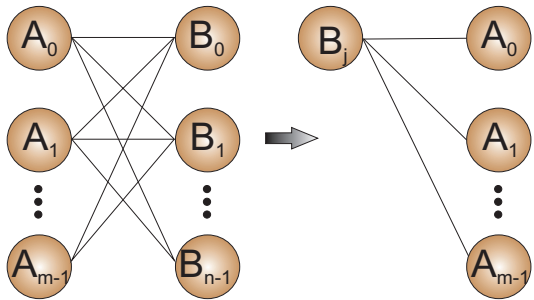}
\caption{(Color online) Graphical representation (figure on the left) of bipartite Bell scenarios: each vertex represents an observable and edges connected observables that are jointly measurable. The existence of a
classical description for all pairwise observables is equivalent to the existence of classical descriptions for the $n$ subsystems (figure on the right), $\left\{A_0,\cdots,A_{m-1},B_j\right\}$ with $j=0,\cdots,n-1$,
coinciding on $\left\{A_0,\cdots,A_{m-1}\right\}$.}\label{fig:pieces1}
\end{figure}

To characterize the entropic cone of a multipartite marginal model we can in principle proceed as before, first simplifying the set of basic inequalities. For example for $3$ parties, similarly to the bipartite case we can
restrict the initial set of inequalities to the ones describing the following two subsystems, $\left\{A_0,A_1,B_0,B_1,C_0\right\}$ and $\left\{A_0,A_1,B_0,B_1,C_1\right\}$, where $A_i$, $B_j$ and $C_k$ with $i,j,k= \left\{
0,1 \right\}$ describe the measurement choices available to the parties. However, even with that the FM elimination still demanded too many computational resources and we were not able to finish the computation. This
highlights the value of analytical derivations as the one in Sec. \ref{sec:Multi}.

\subsection{Bipartite scenario}
In the simplest case, given by the CHSH scenario ($m_a=m_b=2$, that is, $2$ measurement settings for Alice and Bob) it is known \cite{Chaves2012,FritzChaves2013} that the only class of non-trivial inequalities is given by
the entropic CHSH. Using the computational approach described above it follows that in the case with $(m_a=2,m_b=3)$ there are still only entropic CHSH inequalities, in full analogy with the probabilistic case. However,
differences to the probabilistic case already start to appear in the case $(m_a=m_b=3)$. For probabilities there are only two different classes of non-trivial inequalities, the CHSH and the $I_{3322}$ inequality
\cite{Collins2004}. However for entropies there are $4$ classes of non-trivial tight inequalities, shown in the table \ref{tab_bi}. Inequalities $3$ and $4$ correspond, respectively, to the $I^{E}_{22}$ and to the
$I^{E}_{33}$. By theorem 6 all the inequalities are also valid in the probability space after the proper translation is made, however, inequalities $5$ and $6$ do not correspond to tight inequalities in the correlation
polytope. For the scenario with $(m_a=3,m_b=4)$, $5$ new classes of inequalities have been found, the inequalities $7$ to $11$ in Table \ref{tab_bi}.

\begin{figure}
\includegraphics[width=0.8\linewidth]{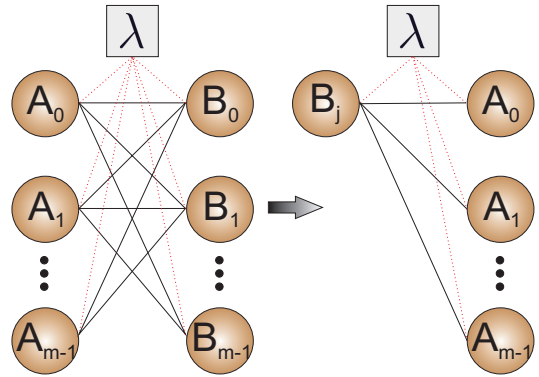}
\caption{(Color online) Graphical representation (figure on the left) of bipartite Bell scenarios: each vertex represents an observable and edges connect observables that are jointly measurable. The dotted red edges
represent the fact that all the correlations between the parties must be mediated by the hidden variable $\lambda$, that is, $I_{A:B| \lambda}=0$. The existence of a classical description for all pairwise observables is
equivalent to the existence of classical descriptions for the $n$ subsystems (figure on the right), $\left\{A_0,\cdots,A_{m-1},B_j,\lambda\right\}$ with $j=0,\cdots,n-1$, coinciding on
$\left\{A_0,\cdots,A_{m-1},\lambda\right\}$.}\label{fig:pieces2}
\end{figure}

We have also performed the same computation, but now bounding the amount of shared randomness, following the idea described in Sec. \ref{subsec:boundedlambda}. First of all we note that as one should expect, the inequalities
derived in the absence of any restriction on $H_{\lambda}$ still define facets of the entropic cone. For the inequalities depending on $H_{\lambda}$ it follows that in the $(m_a=m_b=2)$ and $(m_a=2,m_b=3)$ cases, the only
inequalities bounding the shared randomness are the ones given in Table \ref{tab_bi_BSR}, inequalities $1$ to $3$. For the case $(m_a=m_b=3)$ it follows that the problem is already too demanding and we were not able to
finish the computation. However, a further simplification is possible as shown in Fig. \ref{fig:pieces2}. We consider the union of the basic inequalities for the sets $\left\{A_0,A_1,A_{2},B_j, \lambda \right\}$
$\left\{A_j,B_0,B_1,B_2, \lambda \right\}$ with $j=0,1,2$, coinciding on $\left\{A_0,A_1,A_2\right\}$ and $\left\{B_0,B_1,B_2\right\}$. For each of these sets of inequalities parameterized by $j$, we first project it
down to the cones describing $\left\{ \left\{A_i,B_j\right\}, \left\{A_0,A_1,A_2,\lambda\right\} \right\}$ and $\left\{ \left\{A_i,B_j\right\}, \left\{B_0,B_1,B_2,\lambda\right\} \right\}$ with $i,j=0,1,2$. Then we
proceed to the final FM elimination, eliminating the non-observable variables. With this simplification we were able to finish the computation and $26$ new classes of inequalities were found, inequalities $4$ to $29$ in
Table \ref{tab_bi_BSR}.

All the inequalities in Table \ref{tab_bi_BSR} have a rather remarkable feature. For all of them it is not difficult to prove that the maximal value achievable by local correlations is given by $H_{B_0}+\min{ \left\{
H_{A_0},H_{B_0} \right\}}$. That is, up to $3$ measurements settings and considering dichotomic observables, any local distribution needs not more than $2$ bits of shared randomness to be simulated. As we discuss in Sec.
\ref{sec:discussion} it seems improbable that, increasing the number of measurement settings, only $2$ bits of shared randomness still would suffice to simulate any local distribution, specially the ones arising from
entangled states. It would be very interesting to find classes of entropic inequalities that would require in principle more bits of shared randomness to simulate local distributions.

\subsection{Scenarios with statistical independencies between the hidden variables}
\label{subsec:mar_stat_inde}

Consider three random variables $A$, $B$ and $C$, characterizing for instance some traits of three different languages \cite{Steudel2010}. From the observed data one concludes that all the three variables are all pairwise
maximally correlated, for example, $I_{A:B}=I_{A:C}=I_{B:C}=H_{A}=H_{B}=H_{C}=H_{ABC}$. Furthermore, no conditional independencies can be inferred from the data. The question is then: are the observed correlations compatible
with a causal structure involving no common ancestor to all the three variables (Fig. \ref{fig:triangle} on the right)? Or is a common ancestor needed to explain the data (Fig. \ref{fig:triangle} on the left)? A
direct application of causal discovery algorithms \cite{Pearlbook,Spirtesbook,Spekkens2012} would try to distinguish between the two causal structures, but since no conditional independencies are imposed and by the principle
of minimality (Occam's Razor), the algorithm would return the causal structure on the right as the answer. But clearly this is wrong, because this causal structure implies that (for the observed data) if $A$ is maximally
correlated with $B$, then it should be completely uncorrelated with $C$.

\begin{figure}[t!]
\includegraphics[width=0.9\linewidth]{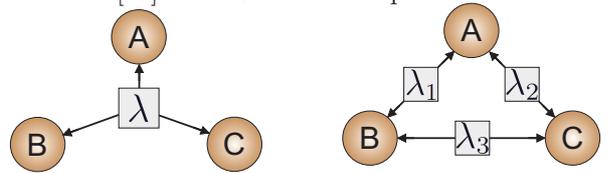}
\caption{If the correlations between three random variables are not sufficiently strong, they can be achieved by a causal structure involving no common ancestor to all the variables (left). Sufficiently strong correlations
can only be explained by the existence of common ancestor (left).}\label{fig:triangle}
\end{figure}

The entropic approach offers a surprisingly simple solution in this case. It is not difficult to show that the causal structure on the left of Fig. \ref{fig:triangle} implies a bound on the correlations given by
$I_{A:B}+I_{A:C} \leq H_{A}$ (and permutations thereof) \cite{Fritz2012}. The observed data clearly violates this inequality, meaning that it cannot be explained by the corresponding causal structure. That is, an ancestor
common to all the three variables is required to explain the observed probability distribution. It is interesting to note that this same scenario has been considered under two very different perspectives, from the purely
causal inference one \cite{Steudel2010} but also from the quantum non-locality point of view \cite{Fritz2012,Branciard2012}.

The causal structure depicted on the right of Fig. \ref{fig:triangle} imply many statistical independencies. As an example we have that $I_{\lambda_1:\lambda_2}=I_{\lambda_1:\lambda_3}=I_{\lambda_2:\lambda_3}=0$ and
$I_{A:B|\lambda_1}=I_{A:C|\lambda_2}=I_{B:C|\lambda_3}=0$.
Using all
available constraints many of the variables can be eliminated. A
final FM elimination gives as a result that $3$ different classes of
non-trivial entropic inequalities completely characterize the marginal
scenario:
\begin{widetext}
\begin{align}
\label{triangle_1}
& H_A+ H_B + H_C- H_{AB}-H_{AC}     \leq 0     \\
\label{triangle_2}
& 3H_A + 3H_B +3H_C -3H_{AB} -2H_{AC} -2H_{BC}+ H_{ABC} \leq 0  \\
\label{triangle_3}
& 5H_A + 5H_B +5H_C -4H_{AB} -4H_{AC} -4H_{BC}+ 2H_{ABC} \leq 0
\end{align}
\end{widetext}
Note that the inequality \eqref{triangle_1} is exactly the same as the one obtained in \cite{Fritz2012}. Our derivation shows that indeed this is a tight Shannon-type inequality. However, the are two other inequivalent classes, inequalities \eqref{triangle_2} and \eqref{triangle_3} that were not known before.

Another interesting case is the one of a common ancestor to all the variables, as depicted on the left of Fig. \ref{fig:triangle}, however now bounding the entropy of the common ancestor to be $H_{\lambda} \leq \mathcal{C}$.
In this case we find that the only non-trivial inequality is given by
\begin{equation}
\label{triangle_ancestor}
H_{AB}+H_{AC}+H_{BC}- 2H_{ABC} \leq  \mathcal{C}
\end{equation}
For a distribution fulfilling $I_{A:B}=I_{A:C}=I_{B:C}=H_{A}=H_{B}=H_{C}=H_{ABC}$ the maximal value of the expression \eqref{triangle_ancestor} is $H_{A}$, that is, as expected the distribution needs not more than $H_A$ bits of shared randomness to be achieved.

\section{Discussion}
\label{sec:discussion}

In this work we have explored the entropic approach to marginal problems, gathering several results that we believe may pave the way to a better understanding and more systematic application of entropic inequalities in a
wide range of applications. In the next paragraphs we summarize and briefly discuss our findings, with special attention to the open problems and possibilities that we believe deserve future investigation.

We have shown a correspondence between Shannon-type inequalities and inequalities in the probability space, stating that any Shannon-type inequality is also a valid probabilistic inequality if a very simple translation is
made. This correspondence formally explains the similarities observed for the n-cycle marginal scenario (that has as a particular case the CHSH scenario) \cite{Araujo2012,Chaves2012,FritzChaves2013} between the entropic
inequalities and the probabilistic version. For the n-cycle scenario all the non-trivial Shannon entropic inequalities have an exact correspondence in the probability space \cite{FritzChaves2013,Chaves2013}, however this is not
true in general, since not all probabilistic inequalities define valid Shannon entropic inequalities, that is, they involve probability inequalities that cannot be translated to a Shannon type entropic inequality. Also, as
mentioned before not all valid entropic inequalities are of the Shannon type. Could it be that taking into account non-Shannon type inequalities a deeper correspondence between entropic and probabilistic inequalities can be
made? The use of non-Shannon type inequalities is also interesting from a practical perspective, since in principle taking them into account one may get more restrictive inequalities, bounding more tightly the set of
allowed correlations.

Based on the correspondence between entropic and probabilistic inequalities we analytically proved the entropic version of the Collins-Gisin inequalities \cite{Collins2004}, valid for a bipartite scenario where each party
has access to $m$ measurements each ($m_a=m_b=m$). The entropic inequalities have the advantage of being valid for observables with any number of outcomes while the CG inequalities are specially tailored for dichotomic ones.
Moreover, for the scenarios with ($m_a=m_b=2$), ($m_a=m_b=3$) and $(m_a=3, m_b=4)$  we have computationally (through the FM elimination) derived all the entropic inequalities and shown that there are, respectively $3$, $6$,
$11$ inequivalent classes of inequalities. For the computational results, since the FM elimination generally produces as an output a huge list of redundant inequalities, we have also proven a result that allows one to check,
given the list of extremal points and half-lines of the convex set, if the inequality corresponds to a facet or not (see Appendix \ref{find_facets}).

We have also considered, for the bipartite case with ($m_a=m_b=2$) and ($m_a=m_b=3$), the effects of bounded shared randomness and shown that in these cases any local distribution with dichotomic outcomes can be entropically
simulated with at most two bits of shared randomness. This is a very interesting point that deserves further investigation. It seems implausible that increasing the number of measurement settings, any local distribution
would still require at most two bits of shared randomness. To further understand that, we have analytically proven a different entropic version of the CG inequalities (for arbitrary $m$), where the effects of bounded shared
randomness are taken into account. However, these inequalities still have the surprising property that no more than two bits of shared randomness are necessary to entropically simulate any local distribution. However, this
is only one class of inequalities and there are possibly many more with increasing $m$. If one can find other classes of inequalities one could investigate, for example, what are the shared randomness requirements to
simulate the correlations of a Werner state $\varrho_{\text{W}}=v\ket{\Phi^{+}}\bra{\Phi^{+}}+(1-v)\mathbb{I}/4$ parameterized by the visibility $v$ \cite{Werner1989}. In the region where the state is known to violate
some Bell inequality \cite{Vertesi2008}, since the state is nonlocal it follows that even an infinite amount of shared randomness is not sufficient to reproduce the correlations. There are however two interesting
regions where not much is known. First, for $ 1/3< v < 1/K_G \approx 0.66$ the state is entangled but local \cite{Acin2006}. Since the state is entangled one may expect that more shared randomness would be required,
but how much of it? The most interesting region is of course the one for $ 1/K_G< v < v_{\text{Ver}} \approx 0.7056 $ where it is not known if the state is nonlocal or not.
What are the shared randomness requirements for correlations obtained in this region?

Working in a generalization of the approach followed in \cite{Braunstein1988}, we derived entropic inequalities for multipartite marginal scenarios consisting of any number of parties, each having access to two observables
with any possible number of measurement outcomes. Using specific projective measurements we have numerically shown that the violation of these inequalities for multidimensional multipartite GHZ states increase with both the
size and the local dimension of the state. An interesting perspective would be the possible use of these inequalities as multipartite dimension witnesses, similarly to what has been suggested in \cite{Grandjean2012}.

Finally we have considered a scenario involving conditional independencies, for which the question is to decide if a given correlation for the observable quantities is compatible with a causal structure involving only
pairwise common ancestors. A natural question is how to generalize the obtained results to the case of many observable quantities and different configurations of common ancestors
\cite{Steudel2010}. An interesting related problem would be to understand relaxations over the bilocality assumption of entanglement swapping experiments \cite{Branciard2010}, for example, allowing correlations between the
hidden variables while keeping the bilocality on the level of the observed quantities. Similarly one could use entropic inequalities to relax the measurement independence assumption \cite{Hall2010,Barrett2010}, stating that
the measurement choice made by the parties is independent of the hidden variable.

\section{Acknowledgements}
It is a pleasure to thank Dominik Janzing for insightful discussions
about causal structures. We also would like to thank A.\ Ac\'in and
J.\ B.\ Brask for pointing out the potential application of bounded shared
randomness inequalities
for Werner states. Our work is supported by the Excellence Initiative of the German Federal and State Governments (Grant ZUK 43).

\bibliography{sharedrand}

\appendix
\label{sec:appendix}

\section{Tables with entropic inequalities}
\label{tables_ineqs}
\begin{table*}
\begin{tabular}{|c| c c c c| c c c c| c c c c c c c c c c c c|} \hline
\multicolumn{21}{|c|}{Entropic Bipartite Inequalities}\\
\hline
      \multicolumn{1}{|c|}{\textbf{\#}}
    & \multicolumn{4}{|c|}{\scriptsize$H(A_x)$}
    & \multicolumn{4}{|c|}{\scriptsize$H(B_y)$}
    & \multicolumn{12}{|c|}{\scriptsize$H(A_xB_y)$}\\
\tiny{$x$ / $y$ / $xy$}
&\tiny{0}&\tiny{1}&\tiny{2}&\tiny{3}
&\tiny{0}&\tiny{1}&\tiny{2}&\tiny{3}
&\tiny{00}&\tiny{01}&\tiny{02}&\tiny{03}&\tiny{10}&\tiny{11}&\tiny{12}&\tiny{13}&\tiny{20}&\tiny{21}&\tiny{22}&\tiny{23}
\\
\hline
\textbf{1}
&\tiny{1}&\tiny{0}&\tiny{0}&\tiny{0}
&\tiny{0}&\tiny{0}&\tiny{0}&\tiny{0}
&\tiny{-1}&\tiny{0}&\tiny{0}&\tiny{0}&\tiny{0}&\tiny{0}&\tiny{0}&\tiny{0}&\tiny{0}&\tiny{0}&\tiny{0}&\tiny{0}
\\
\hline
\textbf{2}
&\tiny{-1}&\tiny{0}&\tiny{0}&\tiny{0}
&\tiny{-1}&\tiny{0}&\tiny{0}&\tiny{0}
&\tiny{1}&\tiny{0}&\tiny{0}&\tiny{0}&\tiny{0}&\tiny{0}&\tiny{0}&\tiny{0}&\tiny{0}&\tiny{0}&\tiny{0}&\tiny{0}
\\
\hline
\textbf{3}
&\tiny{1}&\tiny{0}&\tiny{0}&\tiny{0}
&\tiny{1}&\tiny{0}&\tiny{0}&\tiny{0}
&\tiny{1}&\tiny{1}&\tiny{0}&\tiny{1}&\tiny{-1}&\tiny{0}&\tiny{0}&\tiny{0}&\tiny{0}&\tiny{0}&\tiny{0}&\tiny{0}
\\
\hline
\textbf{4}
&\tiny{1}&\tiny{0}&\tiny{0}&\tiny{0}
&\tiny{2}&\tiny{1}&\tiny{0}&\tiny{0}
&\tiny{-1}&\tiny{-1}&\tiny{-1}&\tiny{-1}&\tiny{-1}&\tiny{1}&\tiny{-1}&\tiny{1}&\tiny{0}&\tiny{0}&\tiny{0}&\tiny{0}
\\
\hline
\textbf{5}
&\tiny{1}&\tiny{1}&\tiny{0}&\tiny{0}
&\tiny{1}&\tiny{1}&\tiny{0}&\tiny{0}
&\tiny{-1}&\tiny{1}&\tiny{-1}&\tiny{0}&\tiny{-1}&\tiny{-1}&\tiny{-1}&\tiny{-1}&\tiny{+1}&\tiny{0}&\tiny{0}&\tiny{0}
\\
\hline
\textbf{6}
&\tiny{1}&\tiny{1}&\tiny{0}&\tiny{0}
&\tiny{2}&\tiny{0}&\tiny{0}&\tiny{0}
&\tiny{-1}&\tiny{-1}&\tiny{-1}&\tiny{-1}&\tiny{-1}&\tiny{1}&\tiny{-2}&\tiny{1}&\tiny{0}&\tiny{0}&\tiny{0}&\tiny{0}
\\
\hline
\textbf{7}
&\tiny{2}&\tiny{1}&\tiny{0}&\tiny{0}
&\tiny{1}&\tiny{1}&\tiny{0}&\tiny{0}
&\tiny{-1}&\tiny{0}&\tiny{-1}&\tiny{-1}&\tiny{0}&\tiny{-1}&\tiny{-1}&\tiny{1}&\tiny{-1}&\tiny{-1}&\tiny{1}&\tiny{0}
\\
\hline
\textbf{8}
&\tiny{1}&\tiny{1}&\tiny{0}&\tiny{0}
&\tiny{1}&\tiny{1}&\tiny{1}&\tiny{0}
&\tiny{-1}&\tiny{-1}&\tiny{1}&\tiny{0}&\tiny{-1}&\tiny{0}&\tiny{-1}&\tiny{-1}&\tiny{0}&\tiny{-1}&\tiny{-1}&\tiny{1}
\\
\hline
\textbf{9}
&\tiny{1}&\tiny{1}&\tiny{0}&\tiny{0}
&\tiny{2}&\tiny{1}&\tiny{1}&\tiny{0}
&\tiny{-1}&\tiny{-1}&\tiny{1}&\tiny{-1}&\tiny{-1}&\tiny{1}&\tiny{-1}&\tiny{-1}&\tiny{-1}&\tiny{-1}&\tiny{-1}&\tiny{1}
\\
\hline
\textbf{10}
&\tiny{2}&\tiny{1}&\tiny{0}&\tiny{0}
&\tiny{2}&\tiny{1}&\tiny{0}&\tiny{0}
&\tiny{-1}&\tiny{-1}&\tiny{-1}&\tiny{-1}&\tiny{-1}&\tiny{-1}&\tiny{-1}&\tiny{1}&\tiny{-2}&\tiny{1}&\tiny{1}&\tiny{0}
\\
\hline
\textbf{11}
&\tiny{2}&\tiny{1}&\tiny{0}&\tiny{0}
&\tiny{1}&\tiny{1}&\tiny{1}&\tiny{0}
&\tiny{-1}&\tiny{-2}&\tiny{0}&\tiny{-1}&\tiny{-1}&\tiny{1}&\tiny{-1}&\tiny{-1}&\tiny{1}&\tiny{-1}&\tiny{-1}&\tiny{1}
\\
\hline
\end{tabular}
\caption{All classes of entropic bipartite inequalities for $m_a=2,3$ and $m_b=2,3,4$. We have
listed the coefficients of one inequality in each row, and all
inequalities are of the form $\leq 0$.} \label{tab_bi}
\end{table*}

\begin{table*}
\begin{tabular}{|c| c c c| c c c| c c c c c c c c c| c|} \hline
\multicolumn{17}{|c|}{Entropic Bipartite Inequalities with bounded Shared Randomness}\\
\hline
      \multicolumn{1}{|c|}{\textbf{\#}}
    & \multicolumn{3}{|c|}{\scriptsize$H(A_x)$}
    & \multicolumn{3}{|c|}{\scriptsize$H(B_y)$}
    & \multicolumn{9}{|c|}{\scriptsize$H(A_xB_y)$}
    & \multicolumn{1}{|c|}{\textbf{Bound }}\\
\tiny{$x$ / $y$ / $xy$}
&\tiny{0}&\tiny{1}&\tiny{2}
&\tiny{0}&\tiny{1}&\tiny{2}
&\tiny{00}&\tiny{01}&\tiny{02}&\tiny{10}&\tiny{11}&\tiny{12}&\tiny{20}&\tiny{21}&\tiny{22}
&$c$
\\
\hline
\textbf{1}
&\tiny{1}&\tiny{0}&\tiny{0}
&\tiny{1}&\tiny{0}&\tiny{0}
&\tiny{-1}&\tiny{0}&\tiny{0}&\tiny{0}&\tiny{0}&\tiny{0}&\tiny{0}&\tiny{0}&\tiny{0}
&$1$
\\
\hline
\textbf{2}
&\tiny{1}&\tiny{0}&\tiny{0}
&\tiny{1}&\tiny{0}&\tiny{0}
&\tiny{0}&\tiny{-1}&\tiny{-1}&\tiny{1}&\tiny{0}&\tiny{0}&\tiny{0}&\tiny{0}&\tiny{0}
&$1$
\\
\hline
\textbf{3}
&\tiny{1}&\tiny{1}&\tiny{0}
&\tiny{1}&\tiny{0}&\tiny{0}
&\tiny{-1}&\tiny{-1}&\tiny{1}&\tiny{-1}&\tiny{1}&\tiny{-1}&\tiny{0}&\tiny{0}&\tiny{0}
&$1$
\\
\hline
\textbf{4}
&\tiny{1}&\tiny{1}&\tiny{0}
&\tiny{1}&\tiny{0}&\tiny{0}
&\tiny{-1}&\tiny{-1}&\tiny{1}&\tiny{0}&\tiny{0}&\tiny{-1}&\tiny{-1}&\tiny{1}&\tiny{0}
&$1$
\\
\hline
\textbf{5}
&\tiny{1}&\tiny{1}&\tiny{0}
&\tiny{1}&\tiny{1}&\tiny{0}
&\tiny{-1}&\tiny{1}&\tiny{-1}&\tiny{0}&\tiny{-1}&\tiny{-1}&\tiny{0}&\tiny{-1}&\tiny{1}
&$1$
\\
\hline
\textbf{6}
&\tiny{2}&\tiny{0}&\tiny{0}
&\tiny{1}&\tiny{1}&\tiny{0}
&\tiny{-1}&\tiny{0}&\tiny{-2}&\tiny{-1}&\tiny{0}&\tiny{1}&\tiny{1}&\tiny{-1}&\tiny{0}
&$1$
\\
\hline
\textbf{7}
&\tiny{2}&\tiny{1}&\tiny{0}
&\tiny{1}&\tiny{1}&\tiny{0}
&\tiny{-1}&\tiny{1}&\tiny{-2}&\tiny{0}&\tiny{-1}&\tiny{-1}&\tiny{-1}&\tiny{-1}&\tiny{2}
&$1$
\\
\hline
\textbf{8}
&\tiny{2}&\tiny{1}&\tiny{0}
&\tiny{1}&\tiny{1}&\tiny{0}
&\tiny{-1}&\tiny{0}&\tiny{-2}&\tiny{1}&\tiny{-1}&\tiny{-1}&\tiny{-1}&\tiny{-1}&\tiny{2}
&$1$
\\
\hline
\textbf{9}
&\tiny{2}&\tiny{1}&\tiny{0}
&\tiny{1}&\tiny{1}&\tiny{0}
&\tiny{-2}&\tiny{1}&\tiny{-1}&\tiny{1}&\tiny{-1}&\tiny{-1}&\tiny{-1}&\tiny{-1}&\tiny{1}
&$1$
\\
\hline
\textbf{10}
&\tiny{1}&\tiny{1}&\tiny{1}
&\tiny{1}&\tiny{1}&\tiny{0}
&\tiny{-1}&\tiny{-2}&\tiny{1}&\tiny{-1}&\tiny{1}&\tiny{-1}&\tiny{1}&\tiny{-1}&\tiny{-1}
&$1$
\\
\hline
\textbf{11}
&\tiny{2}&\tiny{1}&\tiny{0}
&\tiny{1}&\tiny{1}&\tiny{0}
&\tiny{-1}&\tiny{1}&\tiny{-2}&\tiny{-1}&\tiny{-1}&\tiny{1}&\tiny{1}&\tiny{-1}&\tiny{0}
&$1$
\\
\hline
\textbf{12}
&\tiny{2}&\tiny{1}&\tiny{0}
&\tiny{1}&\tiny{1}&\tiny{0}
&\tiny{-2}&\tiny{1}&\tiny{-1}&\tiny{-2}&\tiny{-1}&\tiny{1}&\tiny{1}&\tiny{-1}&\tiny{0}
&$1$
\\
\hline
\textbf{13}
&\tiny{2}&\tiny{2}&\tiny{0}
&\tiny{2}&\tiny{1}&\tiny{0}
&\tiny{-2}&\tiny{1}&\tiny{-2}&\tiny{1}&\tiny{-1}&\tiny{-2}&\tiny{-2}&\tiny{-1}&\tiny{2}
&$1$
\\
\hline
\textbf{14}
&\tiny{2}&\tiny{1}&\tiny{0}
&\tiny{1}&\tiny{1}&\tiny{0}
&\tiny{-3}&\tiny{-1}&\tiny{1}&\tiny{-1}&\tiny{1}&\tiny{-1}&\tiny{1}&\tiny{-1}&\tiny{0}
&$1$
\\
\hline
\textbf{15}
&\tiny{3}&\tiny{1}&\tiny{0}
&\tiny{2}&\tiny{1}&\tiny{0}
&\tiny{-2}&\tiny{1}&\tiny{-3}&\tiny{1}&\tiny{-1}&\tiny{-1}&\tiny{-2}&\tiny{-1}&\tiny{2}
&$1$
\\
\hline
\textbf{16}
&\tiny{2}&\tiny{1}&\tiny{0}
&\tiny{1}&\tiny{1}&\tiny{0}
&\tiny{-1}&\tiny{1}&\tiny{-2}&\tiny{-1}&\tiny{-1}&\tiny{1}&\tiny{1}&\tiny{-1}&\tiny{0}
&$2$
\\
\hline
\textbf{17}
&\tiny{2}&\tiny{1}&\tiny{0}
&\tiny{1}&\tiny{1}&\tiny{0}
&\tiny{-2}&\tiny{1}&\tiny{-1}&\tiny{1}&\tiny{-1}&\tiny{-1}&\tiny{0}&\tiny{-1}&\tiny{1}
&$2$
\\
\hline
\textbf{18}
&\tiny{2}&\tiny{1}&\tiny{0}
&\tiny{1}&\tiny{1}&\tiny{0}
&\tiny{1}&\tiny{-1}&\tiny{-2}&\tiny{-2}&\tiny{1}&\tiny{0}&\tiny{0}&\tiny{-1}&\tiny{1}
&$2$
\\
\hline
\textbf{19}
&\tiny{2}&\tiny{2}&\tiny{0}
&\tiny{1}&\tiny{1}&\tiny{0}
&\tiny{-2}&\tiny{1}&\tiny{-1}&\tiny{1}&\tiny{-1}&\tiny{-2}&\tiny{-1}&\tiny{-1}&\tiny{2}
&$2$
\\
\hline
\textbf{20}
&\tiny{3}&\tiny{1}&\tiny{0}
&\tiny{1}&\tiny{1}&\tiny{0}
&\tiny{-2}&\tiny{1}&\tiny{-2}&\tiny{1}&\tiny{-1}&\tiny{-1}&\tiny{-1}&\tiny{-1}&\tiny{2}
&$2$
\\
\hline
\textbf{21}
&\tiny{2}&\tiny{1}&\tiny{1}
&\tiny{1}&\tiny{1}&\tiny{0}
&\tiny{-2}&\tiny{1}&\tiny{-1}&\tiny{-2}&\tiny{-1}&\tiny{2}&\tiny{1}&\tiny{-1}&\tiny{-1}
&$2$
\\
\hline
\textbf{22}
&\tiny{3}&\tiny{1}&\tiny{0}
&\tiny{1}&\tiny{1}&\tiny{0}
&\tiny{-2}&\tiny{1}&\tiny{-2}&\tiny{-2}&\tiny{1}&\tiny{2}&\tiny{1}&\tiny{-1}&\tiny{0}
&$2$
\\
\hline
\textbf{23}
&\tiny{3}&\tiny{2}&\tiny{0}
&\tiny{2}&\tiny{1}&\tiny{0}
&\tiny{1}&\tiny{-2}&\tiny{-2}&\tiny{-2}&\tiny{2}&\tiny{-2}&\tiny{-2}&\tiny{-1}&\tiny{2}
&$2$
\\
\hline
\textbf{24}
&\tiny{3}&\tiny{1}&\tiny{0}
&\tiny{1}&\tiny{1}&\tiny{0}
&\tiny{-1}&\tiny{1}&\tiny{-3}&\tiny{1}&\tiny{-1}&\tiny{-1}&\tiny{-1}&\tiny{-1}&\tiny{2}
&$2$
\\
\hline
\textbf{25}
&\tiny{2}&\tiny{1}&\tiny{1}
&\tiny{1}&\tiny{1}&\tiny{0}
&\tiny{-1}&\tiny{-3}&\tiny{2}&\tiny{-1}&\tiny{1}&\tiny{-1}&\tiny{1}&\tiny{-1}&\tiny{-1}
&$2$
\\
\hline
\textbf{26}
&\tiny{2}&\tiny{2}&\tiny{0}
&\tiny{1}&\tiny{1}&\tiny{0}
&\tiny{-3}&\tiny{-1}&\tiny{2}&\tiny{-1}&\tiny{1}&\tiny{-2}&\tiny{1}&\tiny{-1}&\tiny{0}
&$2$
\\
\hline
\textbf{27}
&\tiny{3}&\tiny{1}&\tiny{0}
&\tiny{1}&\tiny{1}&\tiny{0}
&\tiny{-1}&\tiny{1}&\tiny{-3}&\tiny{-2}&\tiny{-1}&\tiny{2}&\tiny{1}&\tiny{-1}&\tiny{0}
&$2$
\\
\hline
\textbf{28}
&\tiny{3}&\tiny{1}&\tiny{0}
&\tiny{2}&\tiny{1}&\tiny{0}
&\tiny{1}&\tiny{-1}&\tiny{-3}&\tiny{-1}&\tiny{-1}&\tiny{0}&\tiny{-2}&\tiny{1}&\tiny{1}
&$2$
\\
\hline
\textbf{29}
&\tiny{3}&\tiny{2}&\tiny{0}
&\tiny{2}&\tiny{1}&\tiny{0}
&\tiny{-2}&\tiny{2}&\tiny{-3}&\tiny{1}&\tiny{-2}&\tiny{-1}&\tiny{-2}&\tiny{-1}&\tiny{2}
&$2$
\\
\hline
\end{tabular}
\caption{All classes of entropic bipartite inequalities with bounded shared randomness for $m_a=2,3$ and $m_b=2,3$. We have listed the coefficients of one inequality in each row, and all
inequalities are of the form $\leq c$.}\label{tab_bi_BSR}
\end{table*}

\section{Deciding wether an entropic inequality corresponds to a facet of the entropic cone}
\label{find_facets}

Every convex set can be expressed in terms of dual representations, either in terms of extremal points and half-lines or in terms of inequalities (half-spaces) defining the facets of the convex set. As discussed in Sec.
\ref{sec:shannon_cone} not much is known about the extremal half-lines/points of the Shannon-type entropic cone and in order to derive entropic inequalities for a given marginal scenario one needs to rely on the FM
elimination. One problem that arises is that after performing the FM elimination usually the set of inequalities will contain many (for the scenarios we consider in Sec. \ref{sec:computational_results} typically several
thousands) redundant inequalities not corresponding to facets of the cone. That is, among the huge list of inequalities obtained via the FM elimination, we need to find the minimal set of inequalities describing the marginal
scenario, that ones corresponding to facets of the entropic cone. One way to find this minimal set of inequalities is to solve a linear problem, i.e, whenever a given inequality can be expressed as a linear combination (with
positive coefficients) of other inequalities it can be safely eliminated. However, given the typical case we face, of sets containing a huge number of inequalities, this approach soon becomes unfeasible. Notwithstanding the
difficulty in characterizing the extremal rays/points of the Shannon-type entropic cone, for most of the marginal scenarios we consider computationally, we were also able to get a list of them. In order to derive the minimal
set of inequalities and further understand the structure of the entropic cones we rely instead on the information provided by the extremal points and half-lines.

Given the extreme points, extreme directions and a list of inequalities satisfied by all points of some polyhedral set it is easy to decide which of these inequalities belong to a minimal list characterizing the polyhedral
set.

Before we prove this fact in general let us first look at the example of the two-dimensional unbounded closed polyhedral set in two-dimensional Euclidean space that is defined by the inequalities
\begin{equation}
 x\geq0 \text{, }y\geq0 \text{ and }-x+y+1\geq0
\end{equation}
displayed in \ref{fig:convexset}.
Its extreme points are $p_{1}=(0,0)$ and $p_{2}=(1,0)$ and the extreme directions are $v_{1}=(0,1)$ and $v_{2}=(1,1)$. Given a list of the three defining inequalities plus lets say the valid inequalities $x+1\geq0$ and
$x+y\geq0$ we want to reduce it to the minimal list containing only the three defining inequalities.

\begin{figure}[t!]
\includegraphics[width=0.8\linewidth]{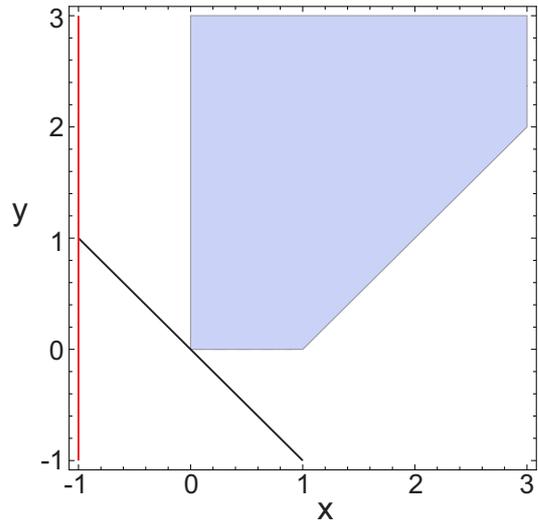}
\caption{(Color online) The convex set (blue color) generated by the half-lines $x\geq0 \text{, }y\geq0 \text{ and }-x+y+1\geq0$, where each of this inequalities generates a facet of the set. The inequalities $x+1\geq0$ and
$x+y\geq0$ are indicated, respectively, in red and black, and it is clear that while valid (since they can be obtained as a linear combination of the generating set) they are not facets of the proposed convex set.
}\label{fig:convexset}
\end{figure}

In this two-dimensional example, one can guess that a necessary and sufficient condition for a valid inequality belonging to the minimal list is that it is saturated by an one-dimensional subset of the polyhedral set.

We start with $x\geq 0$ and see that it is saturated by $p_{1}$ and the halfline $p_{1}+\lambda v_{1}$, $\lambda\geq 0$. The intersection of our polyhedral set with all points
in $\R^{2}$ saturating the inequality is therefore the halfline $p_{1}+\lambda v_{1}$. Its dimension is one and thus the inequality belongs to the minimal list.

Consider now the inequality $y\geq0$. We find $p_{1}$ and $p_{2}$ saturating it and thus the one-dimensional object $conv\{p_{1},p_{2}\}$ being the set of points saturating the inequality. The third defining inequality is
also saturated by an one-dimensional subset of our set - namely the halfline $p_{2}+\lambda v_{2}$, $\lambda\geq 0$.

In turn, the inequality $x+y\geq0$ is only saturated by $p_{1}$. So we found a zero-dimensional subset and disregard this inequality. The last inequality  $x+1\geq0$ is not saturated by any point, so we found the $\emptyset$
and also disregard this inequality. Clearly, the inequalities not belonging to the minimal list are already implied by those belonging to the minimal list.

To begin with the general case we note that every closed convex set $P$ in $\R^n$ that contains no lines is the convex hull of its extreme points and extreme half-lines (corollary 2.6.15 of \cite{Webster1995}). In our case,
however, after the FM elimination performed with PORTA \cite{porta}, it is not the extreme points and extreme half-lines we are given, but the extreme points and extreme directions. It is obvious that the set containing all
half-lines arising from the combination of all extreme points with all extreme directions contains the set of all extreme half-lines, as a half-line can not be extreme if it arises from the combination of a non extreme point
with a extreme direction. Formally speaking if $x=\alpha x_{1}+\beta x_{2}$ with $\alpha \textit{, } \beta \geq 0$ and $\alpha+\beta=1$ then $hl=x+\lambda y=\alpha (x_{1} + \lambda y ) + \beta (x_{2}+\lambda y)=\alpha
hl_{1}+\beta hl_{2}$, with extreme points $x_{1}$, $x_{2}$, extreme direction $y$ and extreme half-lines $hl_{1}$, $hl_{2}$.

Every inequality $\langle x|h_{i}\rangle\leq C_{i}$ corresponds to a hyper-plane $H_{i}=\{x|\langle x|h_{i}\rangle = C_{i}\}$ that divides the space into two closed half-spaces
\begin{equation}
 H_{i}^{\pm}=\{x|\langle x|h_{i}\rangle\leq \mp C_{i}\} \text{  with  }P\subseteq H_{i}^{-} \text{  for every $i$}
\end{equation}

\begin{definition}
Let $P$ be a closed convex set with dimension $r$ and $H$ some hyperplane with $P\subseteq H^{-}$ then $F:=P\cap H$ is called an exposed face. An exposed face with dimension $r-1$ is called a facet.
\end{definition}

Note that the name \textit{exposed face} is justified by the fact that $F$ indeed is a face of $P$.

With the following lemma it is easy to check wether an inequality corresponds to a facet.

\begin{lemma}
 Let $P=conv\{A\}$ with $A$ being the set of extreme points and extreme
 half-lines of $P$ and $H$ some hyperplane with $P\subseteq H^{-}$ then $F:=P\cap H=conv\{A\cap H\}$.
\end{lemma}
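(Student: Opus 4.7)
The plan is to prove the two inclusions of the set equality separately, exploiting the fact that $H$ is a supporting hyperplane of $P$ rather than a generic hyperplane. Throughout, write $H=\{x\,|\,\langle x,h\rangle = C\}$ so that $P\subseteq H^{-}$ becomes the statement $\langle a,h\rangle \leq C$ for every $a\in P$, and in particular for every $a\in A$ (since $A\subseteq P$).

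The inclusion $\mathrm{conv}\{A\cap H\}\subseteq P\cap H$ will be the easy one. Indeed, $A\cap H\subseteq P$ and $A\cap H\subseteq H$, and since both $P$ and $H$ are convex, taking convex hulls preserves the inclusion. So this direction requires essentially no work.

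The reverse inclusion $P\cap H \subseteq \mathrm{conv}\{A\cap H\}$ is where the supporting-hyperplane hypothesis does the work. Given $x\in P\cap H$, use $P=\mathrm{conv}\{A\}$ to write $x=\sum_{i}\lambda_{i} a_{i}$ with $\lambda_{i}\geq 0$, $\sum_{i}\lambda_{i}=1$, and $a_{i}\in A$. Take the inner product with $h$:
\begin{equation*}
C = \langle x,h\rangle = \sum_{i}\lambda_{i}\langle a_{i},h\rangle \leq \sum_{i}\lambda_{i} C = C.
\end{equation*}
Since each term in the weighted sum is bounded above by $\lambda_{i}C$ and the total equals $C$, every term with $\lambda_{i}>0$ must saturate the bound, i.e.\ $\langle a_{i},h\rangle = C$, meaning $a_{i}\in H$. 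Hence each $a_{i}$ appearing with positive weight lies in $A\cap H$, and $x\in\mathrm{conv}\{A\cap H\}$.

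The only mild subtlety, and the one worth mentioning explicitly, is that $A$ is not just a set of points but also contains entire extreme half-lines. This is harmless: an element $a_{i}$ of the convex combination is always a \emph{point} (lying on some extreme half-line or being an extreme point), and the bound $\langle a_{i},h\rangle\leq C$ holds for every such point because $a_{i}\in P$. So the saturation argument applies verbatim, and no case distinction between points on extreme half-lines versus extreme points is actually needed. There is no real obstacle here; the lemma is essentially a direct consequence of the supporting-hyperplane condition $P\subseteq H^{-}$ combined with the Minkowski-type representation of $P$ by its extreme elements supplied in the preceding paragraphs.
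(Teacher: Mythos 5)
Your proof is correct and follows essentially the same route as the paper's: both reduce the claim to the observation that a convex combination of points in $H^{-}$ can lie on $H$ only if every point entering with positive weight already lies on $H$. You are slightly more careful than the paper in restricting the saturation conclusion to the terms with $\lambda_i>0$ and in separating out the (trivial) inclusion $\mathrm{conv}\{A\cap H\}\subseteq P\cap H$, but these are presentational differences, not a different argument.
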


\begin{proof}
 \begin{eqnarray}
 & F=P\cap H=conv\{A\}\cap H= \\ \nonumber
 & \{x|x=\sum_{k} \alpha_{k} y_{k}\ , \langle\sum_{k} \alpha_{k} y_{k}|h\rangle=C, \\ \nonumber
 & \alpha_{k}\geq 0, \sum_{k} \alpha_{k}=1, y_{k}\in A \}
 \end{eqnarray}

 We know that $P\subseteq H^{-}$ and so $A\subseteq H^{-}$ which means that  $\langle y_{k}|h\rangle\leq C$ for every $k$. Thus the condition
 \begin{equation}
  \langle\sum_{k} \alpha_{k} y_{k}|h\rangle=C
 \end{equation}
can only be fulfilled for every convex combination if $\langle y_{k}|h\rangle=C$ for all $k$. And so

\begin{eqnarray}
& F=P\cap H=conv\{A\}\cap H=\{x|x= \\ \nonumber
& \sum_{k} \alpha_{k} y_{k}\ , \langle y_{k}|h\rangle=C,\alpha_{k}\geq 0, \\ \nonumber
& \sum_{k} \alpha_{k}=1, y_{k}\in A \}=conv\{A\cap H\}
 \end{eqnarray}
\end{proof}

For every inequality it is easy to find the set $A\cap H=\{x|\langle x|h\rangle=C, x\in A \}$ i.e. the set of all extreme points and half-lines of $P$ that saturate the inequality.

Still we have to check the dimension of $F$. It is known (theorem 4.1.3 of \cite{Webster1995}) that $F$ as the convex combination of its extreme points and extreme half-lines can be written as the direct sum of the convex
combination of its extreme points and the cone of its extreme directions :
\begin{eqnarray}
& F=\{x|x=\sum_{i=1}^{n}\lambda_{i}p_{i}+\sum_{i=1}^{m}\mu_{i}v_{i}\}= \\ \nonumber
& \{x|x=p_{1}+\sum_{i=2}^{n}\lambda_{i}(p_{i}-p_{1})+\sum_{i=1}^{m}\mu_{i}v_{i}\}
\end{eqnarray}
with $\sum_{i=1}^{n}\lambda_{i}=1$, $\lambda_{i}\geq 0$, $\mu_{i}\geq 0$, $p_{i}$ being the $n$ extreme points and $v_{i}$ the $m$ extreme directions.

We see that $F$ is some linear combination of the vectors ($p_{i}-p_{1}$) and $v_{i}$. The dimension of $F$ thus is equal to the rank of the matrix with columns ($p_{i}-p_{1}$) and $v_{i}$.

In summary what one has to do is to find all extreme points and directions that saturate a given inequality and calculate the rank of this matrix. If and only if it is equal to $(r-1)$ the inequality induces a facet.

\begin{widetext}

\section{Proving the $I^{E}_{mm} \leq 0$}
\label{Imm_app}

The proof will consist of two main steps. In the first step we will show how to obtain all the terms of the first two rows and the terms of the last row plus the marginal $H(A_0)$ in \eqref{Imm_ent}. We begin with the
SM(sub-modularity) inequality
\begin{equation}
H_{A_0A_1 \cdots A_{m-2}B_0}+H_{A_0A_1 \cdots,A_{m-2} B_1} \geq H_{A_0A_1 \cdots A_{m-2}B_0B_1}+_{A_0A_1\cdots A_{m-2}}.
\end{equation}
We then cancel the two NO (non-observable) terms on the LHS by adding the following SM inequalities
\begin{align}
& H_{A_0B_0}+H_{A_1A_2\cdots A_{m-2}B_0} \geq H_{A_0A_1\cdots A_{m-2}B_0}+H_{B_0} \\
& H_{A_0B_1}+H_{A_1A_2\cdots A_{m-2}B_1} \geq H_{A_0A_1\cdots A_{m-2}B_1}+H_{B_1}.
\end{align}
We continue to cancel the two NO terms on the LHS introduced by the last inequalities, by adding the following SM inequalities
\begin{align}
& H_{A_1 B_0}+H_{A_2 A_3 \cdots A_{m-2} B_0} \geq H_{A_1 A_2 \cdots A_{m-2} B_0}+H_{B_0} \\
& H_{A_1 B_1}+H_{A_2 A_3 \cdots A_{m-2} B_1} \geq H_{A_1 A_2 \cdots A_{m-2} B_1}+H_{B_1}.
\end{align}
Note that in each of these steps we get the O (observable) terms $\left\{ H_{A_i,B_0},H_{A_i,B_1},H_{B_0},H_{B_1} \right\}$. We continue doing that until we add the inequalities
\begin{align}
& H_{A_{m-4} B_0}+H_{A_{m-3} A_{m-2} B_0} \geq H_{A_{m-4} A_{m-3} A_{m-2} B_0}+H_{B_0} \\
& H_{A_{m-4} B_1}+H_{A_{m-3} A_{m-2} B_1} \geq H_{A_{m-4} A_{m-3} A_{m-2} B_1}+H_{B_1}.
\end{align}
Now we add the inequalities
\begin{align}
& H_{A_{m-3} B_0}+H_{A_{m-2} B_0} \geq H_{A_{m-3} A_{m-2} B_0}+H_{B_0} \\
& H_{A_{m-3} B_1}+H_{A_{m-2} B_1} \geq H_{A_{m-3} A_{m-2} B_1}+H_{B_1}.
\end{align}
In these $m-2$ steps we have added all the terms $\left\{ H_{A_i B_0},H_{A_i B_1}\right\}$ for $i=0,\cdots,m-2$ and $m-2$ $\left\{H_{B_0},H_{B_1} \right\}$. To complete the two first rows we add the SM inequality
\begin{equation}
H_{A_{m-1} B_0}+H_{B_0 B_1} \geq H_{A_{m-1} B_1}+H_{B_0}.
\end{equation}
To obtain the last row plus the marginal $H(A_0)$ we add the SM inequality
\begin{equation}
H_{A_0 B_{m-1}}+H_{A_0 A_1}\geq H_{A_1 B_{m-1}}+H_{A_0}.
\end{equation}
The remaining NO terms are
\begin{equation}
H_{A_0 A_1}+H_{B_0 B_1}\geq H_{A_0 A_1 \cdots A_{m-2} B_0 B_1}+H_{A_0 A_1 \cdots A_{m-2}}.
\end{equation}
We can cancel two of them by simply adding the M (monotonicity) inequality
\begin{equation}
H_{A_0 A_1 \cdots A_{m-2} B_0 B_1}\geq H_{B_0 B_1}.
\end{equation}
The remaining NO terms are then
\begin{equation}
H_{A_0 A_1} \geq H_{A_0 A_1 \cdots A_{m-2}}.
\end{equation}

In order to conclude the proof we need to show that the remaining $m-3$ rows minus the NO terms above define a valid inequality. To show that, first note that every missing row ($j=2,\cdots,m-2$) is of the form
\begin{equation}
\label{secondstep}
\sum_{i=0,\cdots,m-j} (-1)^{\delta_{i,m-j}} H_{A_i B_j} - (m-j-1)H_{B_j}.
\end{equation}
First we are going to prove that
\begin{equation}
\label{secondstep_inter}
\sum_{i=0,\cdots,m-j} (-1)^{\delta_{i,m-j}} H_{A_i B_j} - (m-j-1)H_{B_j}+ H_{A_0 \cdots A_{m-j}}- H_{A_0 \cdots A_{m-j-1}} \geq 0.
\end{equation}
To do that we add the following inequalities
\begin{align}
&H_{A_0 \cdots A_{m-j}}+H_{A_0 \cdots,A_{m-j-1} B_j} - H_{A_{m-j} B_j} - H_{A_0 \cdots A_{m-j-1}}\\
&H_{A_{m-j-1} B_j}+H_{A_0 \cdots A_{m-j-2} B_j} - H_{A_0 \cdots A_{m-j-1},B_j} -H_{B_j}\\
&H_{A_{m-j-2} B_j}+H_{A_0 \cdots,A_{m-j-3} B_j} - H_{A_0 \cdots A_{m-j-2} B_j} -H_{B_j}\\ \nonumber
& \cdots \\
&H_{A_2 B_j}+H_{A_0 A_1 B_j} - H_{A_0 A_1 A_2 B_j} -H_{B_j}\\
&H_{A_0 B_j}+H_{A_1 B_j} - H_{A_0 A_1 B_j} -H_{B_j}.
\end{align}
To finish the proof we only need to add the inequalities \eqref{secondstep_inter} for $j=2,\cdots,m-2$, that is
\begin{align}
& \sum_{j=2,\cdots,m-2}\sum_{i=0,\cdots,m-j} (-1)^{\delta_{i,m-j}} H_{A_i B_j} - (m-j-1)H_{B_j}+ H_{A_0 \cdots A_{m-j+1}}- H_{A_0 \cdots A_{m-j}} \geq 0 \\
& \rightarrow \sum_{j=2,\cdots,m-2} \sum_{i=0,\cdots,m-j} (-1)^{\delta_{i,m-j}} H_{A_i B_j} - (m-j-1)H_{B_j} - H_{A_0 A_1} + H_{A_0 A_1 \cdots A_{m-2}} \geq 0.
\end{align}

This concludes the proof.

\section{Proving the $BI^{E}_{mm} \leq \mathcal{C}$ inequality}
\label{Imm_bounded_app}
First remember that bounding the shared randomness between the parties implies the following constraint
\begin{align}
& H_{A_0 A_1 \cdots A_{m-1}}+H_{B_0 B_1 \cdots B_{m-1}}-H_{A_0 A_1 \cdots A_{m-1} B_0 B_1 \cdots B_{m-1}} \leq H_{\lambda}\\
& I_{A_0 A_1 \cdots A_{m-1}:B_0 B_1 \cdots B_{m-1}} \leq H_{\lambda}.
\end{align}

In order to prove the inequality $BI^{E}_{mm} \leq \mathcal{C}$ \eqref{BI_H}, we first note that the rows $j=2,\cdots,m-2$ are exactly the same as for the $I^{E}_{mm} \leq 0$ inequality \eqref{Imm_ent}, that fulfill the
inequality
\begin{equation}
-\sum_{j=2,\cdots,m-2} \sum_{i=0,\cdots,m-j} (-1)^{\delta_{i,m-j}} H_{A_i B_j} + (m-j-1)H_{B_j} + H_{A_0 A_1} - H_{A_0 A_1 \cdots A_{m-2}} \leq 0.
\end{equation}
If we can prove that the first two rows plus the last row plus the marginal $H_{A_0}$ in \eqref{BI_H} define a valid inequality of the form
\begin{align}
& H_{B_0}-H_{A_{m-1} B0} -\sum_{i=0,\cdots,m-1} (-1)^{\delta_{i,m-1}} H_{A_i B_j} + (m-2)H_{B_1} \\ \nonumber
& +H_{A_0} -H_{A_0 B_{m-1}}+H_{A_1 B_{m-1}} - H_{A_0 A_1} + H_{A_0 A_1 \cdots A_{m-2}} \leq  H_{\lambda} \leq \mathcal{C}
\end{align}
then we have proven $BI^{E}_{mm}\leq \mathcal{C}$.

We are going to show next the proof for $m$ even but $m$ odd follows along similar lines.

First add
\begin{equation}
-H_{A_0 B_{m-1}}-H_{A_0 A_1}+H_{A_1 B_{m-1}}+H_{A_0} \leq 0.
\end{equation}
We continue adding the following SM inequalities
\begin{align}
& -H_{A_0 B_1}-H_{A_1 B_1}+H_{A_0 A_1 B_1}+H_{B_1} \leq 0 \\
& -H_{A_2 B_1}-H_{A_3 B_1}+H_{A_2 A_3 B_1}+H_{B_1} \leq 0\\\nonumber
&\vdots \\
& -H_{A_{m-4} B_1}-H_{A_{m-3} B_1}+H_{A_{m-4} A_{m-3} B_1}+H_{B_1} \leq 0.
\end{align}

Add
\begin{align}
& H_{A_0 A_1 \cdots A_{m-1}}+H_{B_0 B_1}-H_{A_0 \cdots A_{m-1} B_0 B_1} - H_{\lambda} \leq 0 \\
& H_{A_{m-1} B_0 B_1}+H_{B_0}-H_{B_0 B_1}-H_{A_{m-1} B_0} \leq \\
& H_{A_0 A_1 \cdots A_{m-1} B_0 B_1}+H_{A_{m-1} B_1}-H_{A_0 A_1 \cdots A_{m-1} B_1}-H_{A_{m-1} B_0 B_1} \leq 0 \\
& H_{A_0 A_1 \cdots A_{m-2}}+H_{A_0 A_1 \cdots A_{m-1} B_1}-H_{A_0 A_1 \cdots A_{m-1}}-H_{A_0 A_1 \cdots,A_{m-2} B_1} \leq 0.
\end{align}

At this point the missing terms are (plus the NO terms)
\begin{equation}
\frac{(m-2)}{2} H_{B_1}-H_{A_{m-2} B_1}-H_{A_0 A_1 \cdots A_{m-2} B_1}+\sum_{i=0,\cdots,m-4}H_{A_{i} A_{i+1} B_1}  \leq 0.
\end{equation}

Add
\begin{align}
& H_{A_0 A_1 \cdots A_{m-2} B_1}+H_{B_1}-H_{A_2 A_3 \cdots A_{m-2} B_1}-H_{A_0 A_1 B_1}  \leq 0\\
& H_{A_2 A_3 \cdots A_{m-2} B_1}+H_{B_1}-H_{A_4 A_5 \cdots A_{m-2} B_1}-H_{A_2 A_3 B_1} \leq 0\\ \nonumber
& \vdots \\
& H_{A_{m-4} \cdots A_{m-2} B_1}+H_{B_1}-H_{A_{m-2} B_1}-H_{A_{m-4} A_{m-5} B_1} \leq 0.
\end{align}

This concludes the proof.

\end{widetext}

\end{document}